\documentclass[times,11pt]{elsarticle}

\usepackage{amssymb,amsmath,amsthm, color}
\usepackage{hyperref}
\usepackage{tikz}
\usepackage{xcolor}
\newtheorem{theorem}{Theorem}[section]

\newtheorem{coro}[theorem]{Corollary}
\newtheorem{prop}[theorem]{Proposition}

\newtheorem{problem}{Problem}

\theoremstyle{definition}
\newenvironment{algo}[3]{
\flushleft {\bf Algorithm} {\it #1} \\
{\bf Input: } #2 \\
{\bf Output: } #3 \\
}

\newcommand{\ff}[1]{{\mathbb F}_{\! #1}}

\newcommand{\R}{\mathbb R}
\newcommand{\N}{\mathbb N}
\newcommand{\Z}{\mathbb Z}

\newcommand{\cL}{\mathcal L}
\newcommand{\spn}{\mathrm{Span}}
\newcommand{\ev}{{\mathrm ev}}

\newcommand{\vol}{\mathrm{vol}}

\newcommand{\co}[1]{{\mathcal C}_{#1}}

\newcommand{\np}[1]{\mathrm{N}(#1)}
\newcommand{\pyr}[1]{\mathrm{Pyr}(#1)}
\newcommand{\te}[1]{\mathrm{#1}}

\newcommand*\xbar[1]{%
  \hbox{%
    \vbox{%
      \hrule height 0.6pt 
      \kern0.3ex
      \hbox{%
        \kern-0.1em
        \ensuremath{#1}%
        \kern-0.0em
      }%
    }%
  }%
}
\newcommand{\vX}{\xbar{X}}

\definecolor{applegreen}{rgb}{0.55, 0.71, 0.0}

\definecolor{azu}{rgb}{0.0, 0.5, 1.0}
\begin{document}

\begin{frontmatter}



\title{List decoding of evaluation codes}

\author[gr]{Silouanos Brazitikos}
\ead{silouanb@uoc.gr}
\author[gr]{Theodoulos Garefalakis}
\ead{tgaref@uoc.gr}
\author[gr]{Eleni Tzanaki}
\ead{etzanaki@uoc.gr}

\address[gr]{Department of Mathematics and Applied Mathematics, University of Crete, 70013 Heraklion, Greece}


\begin{abstract}
Polynomial evaluation codes hold a prominent place in coding theory.  In this work,
we study the problem of list decoding for a general class of polynomial evaluation codes,
also known as Toric codes, that are defined for any given convex polytope $P$. Special cases, 
such as Reed-Solomon and Reed-Muller codes, have been studied extensively. We present
a generalization of the Guruswami-Sudan algorithm that takes into account the geometry and the
combinatorics of $P$ and compute bounds for the decoding radius. 

\end{abstract}

\begin{keyword}
Polynomial evaluation codes \sep list decoding \sep Ehrhart polynomial
\end{keyword}
\end{frontmatter}





\section{Introduction}

  Let $q$ be a power of a prime and $\ff{q}$ the finite field with $q$ elements.
  We consider a lattice polytope $P\subseteq\R^m$ and we denote by $\cL_{q}(P)$
  the space of Laurent polynomials over $\ff{q}$ whose monomials have 
  exponent vectors in $P\cap\Z^m$, 
  that is, $$\cL_{q}(P) = \spn_{\ff{q}}\{X_1^{a_1}\cdots X_m^{a_m}  : (a_1,\ldots,a_m)\in 
  P\cap\Z^m\}.$$
  If $S = \{p_1,\ldots,p_n\}\subseteq (\ff{q}^*)^m$
  then we define the evaluation map 
  %
  \begin{eqnarray*}
    \ev : \cL_q(P) &\longrightarrow& \ff{q}^n \\
    f &\mapsto& \bigl(f(p_1),\ldots,f(p_n)\bigr).
  \end{eqnarray*}
  The {\em evaluation code} related to the polytope $P$, denoted by $\co{P,q}$, is the 
  image  of the map $\ev$ over all $f\in\cL_{q}(P)$. 
  If the field $\ff{q}$ is clear from the context, we simply write $\co{P}$ suppressing $q$ in the notation. 
  Since the polynomials in $\cL_q(P)$ are evaluated at points in $(\ff{q}^*)^m$
  and $x^a=x^b$ for any $x\in\ff{q}^*$ and $a\equiv b\pmod{q-1}$, we may assume
  that $P\cap\Z^m \subseteq [0,q-2]^m$. The set $S$ of evaluation points is often taken, in the literature,
  to be $(\ff{q}^*)^m$, but this assumption is not essential in what follows. In fact, we assume that 
  $P\cap\Z^m \subseteq [a_1,b_1]\times \cdots \times [a_m,b_m]$
  with $0\leq a_i \leq b_i \leq q-2$ for $1\leq i\leq m$ and the set of points $S$ contains
  a large enough box, that is $S_1\times \cdots \times S_n\subseteq S$ for sets $S_i\subseteq \ff{q}^*$,
  with $|S_i| > b_i - a_i+1$. The assumptions on $P\cap \Z^m$ and $S$ and the Combinatorial Nullstellensatz
  \cite{alon99} show that the kernel of $\ev$ is trivial, and therefore the dimension
  $k:=\dim(\co{P})$ equals the number of lattice points $|P\cap\Z^m|$ of $P$. The set $\ff{q^n}$ is equipped with the Hamming metric $\Delta$. We denote by $d(\co{P})$ the {\em distance} of the code  $\co{P}$, i.e. the minimum distance between distinct 
  points of the code. It can be easily checked that
  $$ d(\co{P})= n - \max_{0\neq{f}\in\cL_q(P)} |Z(f)|,$$
  where $Z(f)$ denotes the points in $S$ where $f$ vanishes. 

  Evaluation codes may be viewed as a generalization of the well-known generalized
  Reed-Solomon (GRS) codes in higher dimensions. Indeed, a GRS code is
  $C_P$, where the polytope $P$ is the line segment $[0,k-1]$. Furthermore,
  evaluation codes are a generalization of Reed-Muller codes, that
  may be viewed as evaluation codes related to Simplex polytopes.  
  The generality of their definition, in particular the dimension $m$ of the ambient space of the
  defining polytope and the shape of the polytope itself, do not allow
  for very strong and uniform results, as is the case for GRS codes.
  Thus, progress is made by studying special cases. For instance,
  in \cite{SoprLpCt15} the author computes the distance of codes defined by
  special polytopes and also the distance of codes arising from the
  combinatorial construction of polytopes, such as dilation and
  cross product. In \cite{LitSchen06}, the authors focus on the
  dimension $m=2$ and compute the exact distance for various polygons.
  
  On the algorithmic side, efficient decoding of GRS codes
  has been known since more than fifty years, see for instance
  \cite{gor-zie-61,berle-84,sugi-75}. A major breakthrough in
  the area came in 1997 when M. Sudan discovered a list decoding
  algorithm for GRS codes \cite{Sud_DRS97}. The list decoding 
  problem for the code $\co{P}$ is defined as follows:

\begin{problem}[List Decoding]
  Given the finite field $\ff{q}$, the polytope $P$, the evaluation
  set $S=\{p_1,\ldots, p_n\}$ (that define $\co{P}$), an integer
  $t$ and a point $y=(y_1,\ldots, y_n)\in \ff{q}^n$, compute every
  codeword $c \in \co{P}$ such that $\Delta(c,y)\leq n-t$.
\end{problem}

  It is evident that the list decoding problem may be stated as
  a polynomial reconstruction problem:

\begin{problem}[Polynomial Reconstruction]
  Given the finite field $\ff{q}$, the polytope $P$, the evaluation
  set $S=\{p_1,\ldots, p_n\}$ (that define $\co{P}$), an integer
  $t$ and a point $y=(y_1,\ldots, y_n)\in \ff{q}^n$, compute every
  polynomial $f \in \cL_q(P)$ such that $f(p_i) = y_i$ for at
  least $t$ points $p_i$ of $S$.
  \end{problem}
  
  The algorithm of Sudan was later improved by M. Sudan and V. Guruswami \cite{GuSud_IDRS99}.
  In particular, in \cite{Sud_DRS97}, M. Sudan sketches how his
  method can be generalized to higher dimensions and provides
  bounds for the decoding radius. This idea has been developed 
  further by Pellikaan and Wu \cite{pel-wu04}, and improved by Augot and 
  Stepanov \cite{aug_step09} for list decoding of Reed-Muller codes. 
    
In this work, we formulate a variant of Sudan's decoding
algorithm, that takes into account the geometry of the underlying 
polytope. In section~\ref{sec:prelim}, we present the mathematical 
background that is needed for the description and the analysis 
of the algorithms. In section~\ref{sec:basic}, we give the description
and analysis of the basic method. Although the basic method is
a special case of the improved method, that is discussed and analyzed in 
section ~\ref{sec:improved}, we chose to present it first, as it contains
the main ideas and avoids some of the technicalities of the later method.
  
\section{Preliminaries}\label{sec:prelim}

\paragraph{ Minkowski sum and Newton polytopes}

The {\em Minkowski sum} $P+Q$ of two sets 
$P,Q\subseteq \R^m$  is the usual vector sum of 
all pairs of points in $P,Q$,\, i.e., $P+Q=\{x+y: x\in P, y\in Q\}.$
It is not hard to see that if $P,Q$ are lattice polytopes then so is their 
Minkowski sum. 

The {\em support} of a polynomial $f(X_1,\ldots,X_m)\in \mathbb K[X_1,\ldots,X_m]$  
(where $\mathbb K$ is any field) is the set of exponent vectors of the 
monomials appearing in $f$. The {\em Newton polytope} $\np{f}\subseteq \mathbb R^m$ of 
the polynomial $f$
is the convex hull of the support of $f$. 
It is a well known result (see for example \cite{Ostro76}) that, 
for polynomials $f,g \in \mathbb K[X_1,\ldots,X_m]$
the Newton polytope of their product is the Minkowski sum of their Newton polytopes, 
i.e.,\, $\np{fg}=\np{f}+\np{g}$. This is one of the reasons that  the Newton polytope 
can be thought of as a notion of  degree for multivariate polynomials. 

It is possible to compute upper bounds for the number of zeros of a polynomial $f$ in 
a box $S_1\times\cdots\times S_m$ in terms of the sizes $s_j = |S_j|$ and the multi-degree
of $f$ using the so-called footprint bound.  
\begin{theorem}\cite{geil-2000} \label{thm:footprint}
Let $S_j\subseteq \mathbb{K}$ for $1\leq j\leq m$ with $s_j = |S_j|$. 
For a non-zero polynomial $f(X_1,\ldots,X_m) \in \mathbb{K}[X_1,...,X_m]$ let $X_1^{i_1}\cdots X_m^{i_m}$
be a leading monomial and assume $i_1 < s_1,\ldots ,i_m < s_m$. Then $f$
possesses at most $s_1\cdots s_m - (s_1 - i_1)\cdots(s_m - i_m)$ roots over $S_1\times\cdots \times S_m$.
\end{theorem}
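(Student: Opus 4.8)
The plan is to invoke the Gr\"obner-basis footprint technique. Fix a monomial order $\prec$ with respect to which $X_1^{i_1}\cdots X_m^{i_m}$ is a leading monomial of $f$, and for each $j$ put $g_j(X_j)=\prod_{a\in S_j}(X_j-a)\in\mathbb{K}[X_j]$; being a monic univariate polynomial of degree $s_j$ in $X_j$, its leading monomial (in any monomial order) is $X_j^{s_j}$. Consider the ideal $J=\langle f,g_1,\dots,g_m\rangle$ of $\mathbb{K}[X_1,\dots,X_m]$. A point $p$ lies in $V(J)$ exactly when $g_1(p)=\cdots=g_m(p)=0$, i.e. $p\in S_1\times\cdots\times S_m$, and in addition $f(p)=0$; hence $V(J)$ is precisely the set of roots of $f$ in the box, and the goal becomes to bound $|V(J)|$.

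I would then recall the two standard facts used. First (Macaulay): for any ideal $J$ and any monomial order, the residues of the monomials outside the initial ideal $\mathrm{in}_\prec(J)$ --- the \emph{footprint} $\Delta_\prec(J)$ --- form a $\mathbb{K}$-basis of $\mathbb{K}[X_1,\dots,X_m]/J$, so $\dim_{\mathbb{K}}\mathbb{K}[X_1,\dots,X_m]/J=|\Delta_\prec(J)|$. Second: if $V(J)=\{p_1,\dots,p_r\}$ is finite, the evaluation maps $h\mapsto h(p_\ell)$ descend to $\mathbb{K}[X_1,\dots,X_m]/J$ (every element of $J$ vanishes at each $p_\ell$) and are linearly independent there, since any two distinct points of $\mathbb{K}^m$ can be separated by a polynomial over an arbitrary field; consequently $|V(J)|\le\dim_{\mathbb{K}}\mathbb{K}[X_1,\dots,X_m]/J=|\Delta_\prec(J)|$.

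Finally I would bound $|\Delta_\prec(J)|$ by passing to a smaller monomial ideal. Since $f,g_1,\dots,g_m\in J$, their leading monomials lie in $\mathrm{in}_\prec(J)$, so $\mathrm{in}_\prec(J)\supseteq M:=\langle X_1^{i_1}\cdots X_m^{i_m},\,X_1^{s_1},\dots,X_m^{s_m}\rangle$ and hence $\Delta_\prec(J)\subseteq\Delta(M)$. A monomial $X_1^{a_1}\cdots X_m^{a_m}$ lies outside $M$ exactly when $a_j<s_j$ for every $j$ while it is \emph{not} true that $a_j\ge i_j$ for every $j$; there are $s_1\cdots s_m$ monomials with all $a_j<s_j$, of which exactly $(s_1-i_1)\cdots(s_m-i_m)$ also have all $a_j\ge i_j$ (here one uses $i_j<s_j$), so $|\Delta(M)|=s_1\cdots s_m-(s_1-i_1)\cdots(s_m-i_m)$. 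Chaining the inequalities gives $|V(J)|\le|\Delta_\prec(J)|\le|\Delta(M)|=s_1\cdots s_m-(s_1-i_1)\cdots(s_m-i_m)$, which is the assertion.

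The only genuinely non-trivial ingredients are the two recalled facts; both are classical --- Macaulay's theorem (proved via the multivariate division algorithm) and the ``number of points $\le$ dimension of the coordinate ring'' estimate (a Lagrange-interpolation argument) --- so in the actual write-up I would either cite them or include their short proofs, while identifying $V(J)$ with the roots in the box, the containment of initial ideals, and the monomial count are all routine bookkeeping.
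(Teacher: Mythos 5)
Your proof is correct: the identification of the roots in the box with $V(\langle f,g_1,\dots,g_m\rangle)$, the bound of the number of points by the dimension of the quotient via linearly independent evaluation functionals, Macaulay's theorem, and the count of standard monomials of $\langle X_1^{i_1}\cdots X_m^{i_m},X_1^{s_1},\dots,X_m^{s_m}\rangle$ all fit together without gaps. The paper itself gives no proof of this theorem (it is quoted from the reference \cite{geil-2000}), and your argument is essentially the standard footprint/Gr\"obner-basis proof given there, so there is nothing further to reconcile.
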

Applying the footprint bound to a polynomial $f(X_1,\ldots,X_m)$ with Newton polytope
$\np{f} \subseteq [a_1, b_1]\times\cdots\times [a_m,b_m]$ we obtain the following corollary.
\begin{coro} \label{coro:number-of-zeros}
Let $f(X_1,\ldots,X_m)\in \mathbb{K}[X_1,\ldots,X_m]$ be a non-zero polynomial with 
Newton polytope $\np{f} \subseteq [a_1,b_1]\times\cdots\times [a_m,b_m]$. Let $S_j\subseteq \mathbb{K}^*$
for $1\leq j\leq m$ with $s_j = |S_j| > \ell_j + 1$, where $\ell_j = b_j - a_j$
and $S\subseteq (\mathbb{K}^*)^m$ with $S_1\times \cdots\times S_m \subseteq S$.
The number of zeros of $f$ in $S$ is upper bounded by 
$$ |S| - \prod_{j=1}^m (s_j - \ell_j). $$
\end{coro}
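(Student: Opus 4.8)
The plan is to reduce the statement to Theorem~\ref{thm:footprint} after a change of variables that eliminates the lower corner $(a_1,\ldots,a_m)$ of the box containing $\np{f}$. First I would write $g(X_1,\ldots,X_m) = X_1^{-a_1}\cdots X_m^{-a_m} f(X_1,\ldots,X_m)$; since every exponent vector in the support of $f$ lies in $\np{f}\subseteq[a_1,b_1]\times\cdots\times[a_m,b_m]$, the Laurent polynomial $g$ is in fact an honest polynomial, and its Newton polytope satisfies $\np{g}\subseteq[0,\ell_1]\times\cdots\times[0,\ell_m]$ where $\ell_j=b_j-a_j$. Moreover, because $X_1^{-a_1}\cdots X_m^{-a_m}$ is a unit on $(\mathbb K^*)^m$, the polynomials $f$ and $g$ have exactly the same zeros in $S\subseteq(\mathbb K^*)^m$, so it suffices to bound the number of zeros of $g$ over $S$.

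Next I would fix any monomial ordering and let $X_1^{i_1}\cdots X_m^{i_m}$ be the leading monomial of $g$. Since this monomial appears in the support of $g$, its exponent vector lies in $\np{g}$, hence $0\le i_j\le\ell_j$ for every $j$. Combined with the hypothesis $s_j=|S_j|>\ell_j+1$, which gives $\ell_j < s_j-1$, we obtain $i_j\le\ell_j< s_j$, so the hypotheses of Theorem~\ref{thm:footprint} are met with $\mathbb K$ replaced by each $S_j$. Applying the theorem, $g$ has at most $s_1\cdots s_m-(s_1-i_1)\cdots(s_m-i_m)$ roots over the box $S_1\times\cdots\times S_m$. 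Since $i_j\le\ell_j$ we have $s_j-i_j\ge s_j-\ell_j>0$, and therefore $\prod_j(s_j-i_j)\ge\prod_j(s_j-\ell_j)$, which yields the bound $s_1\cdots s_m-\prod_j(s_j-\ell_j)$ on the number of zeros of $g$ in $S_1\times\cdots\times S_m$.

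Finally I would account for the points of $S$ outside the sub-box. Write $|S| = |S_1\times\cdots\times S_m| + \bigl(|S|-|S_1\times\cdots\times S_m|\bigr) = s_1\cdots s_m + r$ with $r\ge 0$. Every zero of $g$ in $S$ either lies in the sub-box or in the remaining $r$ points, so the total number of zeros is at most $\bigl(s_1\cdots s_m-\prod_j(s_j-\ell_j)\bigr)+r = |S|-\prod_{j=1}^m(s_j-\ell_j)$, which is the claimed bound; transporting this back to $f$ via the identity of zero sets completes the argument. The only point requiring care is the verification that the monomial shift produces a genuine polynomial and preserves the zero set on the torus $(\mathbb K^*)^m$; everything else is bookkeeping with the inequalities $0\le i_j\le\ell_j<s_j$.
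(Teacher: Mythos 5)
Your proposal is correct and follows essentially the same route as the paper's proof: factor out the monomial $X_1^{a_1}\cdots X_m^{a_m}$ to reduce to a polynomial with Newton polytope in $[0,\ell_1]\times\cdots\times[0,\ell_m]$, apply Theorem~\ref{thm:footprint} to its leading monomial, and add the points of $S$ outside the box $S_1\times\cdots\times S_m$. Your explicit check that $s_j-i_j\ge s_j-\ell_j>0$ (so the product inequality is legitimate) is a small point the paper leaves implicit, but otherwise the two arguments coincide.
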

\begin{proof}
The assumption $\np{f} \subseteq [a_1,b_1]\times\cdots\times [a_m,b_m]$ implies that
\[ f(X_1,\ldots, X_m) = X_1^{a_1}\cdots X_m^{a_m} g(X_1,\ldots, X_m),\]
where $g$ is a polynomial with Newton polytope contained in $[0,\ell_1]\times\cdots\times [0,\ell_m]$.
Since $S\subseteq (\mathbb{K}^*)^m$, $f$ and $g$ have the same zeros in $S$.   
The leading monomial of $g$ is $X_1^{i_1}\cdots X_m^{i_m}$ with $0\leq i_j\leq \ell_j$ for $1\leq j\leq m$
and its number of zeros in $S_1\times\cdots\times S_m$ is bounded by Theorem~\ref{thm:footprint} by 
\[ s_1\cdots s_m - \prod_{j=1}^m(s_j - i_j) \leq s_1\cdots s_m - \prod_{j=1}^m(s_j - \ell_j). \]
Thus the number of zeros in $S$ is at most 
\[ |S| - s_1\cdots s_m + s_1\cdots s_m - \prod_{j=1}^m(s_j - \ell_j)
   = |S| - \prod_{j=1}^m (s_j - \ell_j).
\]
\end{proof}

For univariate polynomials, it is well known that the degree of a non-zero polynomial 
is an upper bound for its number of roots, counted with multiplicity. The analog of this
for multivariate polynomials also holds, as was shown be Augot and Stepanov.
\begin{theorem}[\cite{aug_step09}, Lemma 1] \label{schwarz-zippel}
Let $f(X_1,\ldots,X_m)\in \mathbb{K}[X_1,\ldots,X_m]$ be a polynomial of total degree $d$ 
and $S\subseteq \mathbb{K}$, with $|S|=s$. The sum of multiplicities of $f(X_1,\ldots, X_m)$
over the points in $S^m$ is at most $d s^{m-1}$.
\end{theorem}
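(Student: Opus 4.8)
The statement to prove is Theorem~\ref{schwarz-zippel}: for a polynomial $f(X_1,\ldots,X_m)$ of total degree $d$ and $S\subseteq\mathbb{K}$ with $|S|=s$, the sum of multiplicities of $f$ over points in $S^m$ is at most $ds^{m-1}$.

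\textbf{The plan.} I would argue by induction on the number of variables $m$. The base case $m=1$ is the classical fact that a nonzero univariate polynomial of degree $d$ has at most $d$ roots counted with multiplicity, which follows by factoring out linear factors $(X-a)^{e_a}$ and noting $\sum_a e_a \le d$. For the inductive step, the key is to pass from a zero $p=(p_1,\ldots,p_m)$ of multiplicity $\mu$ to information about lower-dimensional slices. Recall that $f$ has multiplicity $\ge\mu$ at $p$ means every partial derivative (equivalently, every Hasse derivative) of order $<\mu$ vanishes at $p$; the multiplicity is the largest such $\mu$.

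\textbf{Key steps.} Write $f$ as a polynomial in $X_m$ with coefficients in $\mathbb{K}[X_1,\ldots,X_{m-1}]$. First dispose of the degenerate situation: if $f$ is divisible by a power of a "vertical" factor, i.e.\ if after substituting each value $a\in S$ for $X_m$ the polynomial $f(X_1,\ldots,X_{m-1},a)$ is identically zero for some $a$, one handles that by factoring. More cleanly: for each fixed choice of the first $m-1$ coordinates $(p_1,\ldots,p_{m-1})\in S^{m-1}$, consider the univariate restriction $g(X_m) = f(p_1,\ldots,p_{m-1},X_m)$. If $g\not\equiv 0$, then for any $p_m\in S$ the multiplicity of $f$ at $(p_1,\ldots,p_m)$ is \emph{at most} the multiplicity of $g$ at $p_m$ (restricting to a line can only decrease multiplicity), so $\sum_{p_m\in S} \mathrm{mult}_f(p_1,\ldots,p_m) \le \sum_{p_m\in S}\mathrm{mult}_g(p_m) \le \deg g \le d$. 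Summing over the $s^{m-1}$ choices of $(p_1,\ldots,p_{m-1})$ gives the bound $ds^{m-1}$ — but only for those lines on which $g\not\equiv 0$. The remaining case is when $g\equiv 0$, i.e.\ $X_m - $ nothing; rather, $f$ vanishes identically on a whole vertical line. Then write $f = \prod (\text{stuff})$: collect the content. Concretely, viewing $f\in \mathbb{K}[X_1,\ldots,X_{m-1}][X_m]$, factor out the content $c(X_1,\ldots,X_{m-1}) = \gcd$ of the coefficients, so $f = c\cdot \tilde f$ with $\tilde f$ primitive. For a point $p$ with $g\equiv 0$ we need $c(p_1,\ldots,p_{m-1})=0$. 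Then $\mathrm{mult}_f(p) = \mathrm{mult}_c(p_1,\ldots,p_{m-1}) + \mathrm{mult}_{\tilde f}(p)$, and one applies the induction hypothesis to $c$ (in $m-1$ variables, degree $\deg c$) while handling $\tilde f$ by the line argument, after checking $\tilde f$ restricted to that vertical line is nonzero. Bookkeeping the degrees: $\deg c + \deg_{\text{as used for }\tilde f} \le d$, and carefully combining $\deg c\cdot s^{m-2}\cdot s$ (summing the last coordinate freely contributes a factor $s$, but $\mathrm{mult}_c$ doesn't depend on $p_m$) with the $\tilde f$ contribution yields $\le d s^{m-1}$ overall.

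\textbf{Main obstacle.} The delicate point is the correct accounting when $f$ has vertical components, i.e.\ making the content/primitive-part decomposition interact properly with multiplicities and with the free summation over the last variable. One must be careful that $\mathrm{mult}_f(p) = \mathrm{mult}_c(p') + \mathrm{mult}_{\tilde f}(p)$ (additivity of multiplicity under products, where $p' = (p_1,\dots,p_{m-1})$), that the degrees split as $\deg c + \deg\tilde f \le d$, and that on vertical lines where $\tilde f$ restricts to a nonzero univariate polynomial its degree is bounded by $\deg_{X_m}\tilde f \le d - \deg c$; then the double sum telescopes correctly. An alternative, perhaps cleaner, route that avoids content entirely: apply a generic invertible linear change of coordinates (or use that $\mathbb K$ may be taken large enough, or pass to an extension) to ensure $f$ is "monic-like" in $X_m$ up to a nonzero scalar, so that no restriction $f(p_1,\ldots,p_{m-1},X_m)$ vanishes identically — but since $S$ is a fixed finite set this change of variables does not obviously preserve $S^m$, so the content argument is the safer path. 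I expect the write-up to follow the induction with the content decomposition, spending most of its effort on this degree bookkeeping.
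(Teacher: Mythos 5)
Your induction skeleton and the line-restriction step are sound: when $g(X_m)=f(p_1,\ldots,p_{m-1},X_m)\not\equiv 0$ one indeed has $\mathrm{mult}(f,p)\le \mathrm{mult}(g,p_m)$, so such a line contributes at most $d$. (For the record, the paper offers no proof at all here -- it cites the result as Lemma~1 of Augot--Stepanov -- so your argument stands or falls on its own.) The genuine gap is in your treatment of the degenerate lines. The content/primitive decomposition $f=c(X_1,\ldots,X_{m-1})\,\tilde f$ does not do what you need: primitivity means the coefficients of $\tilde f$ (viewed in $\mathbb{K}[X_1,\ldots,X_{m-1}][X_m]$) have no common \emph{factor}, not that they have no common \emph{zero}. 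Hence your claim that $f(p',X_m)\equiv 0$ forces $c(p')=0$ is false: $f=X_1X_m+X_2$ has content $1$, yet vanishes identically on the vertical line over $p'=(0,0)$. After extracting the content you are left with exactly the same statement for the primitive part $\tilde f$ (still $m$ variables, still possibly vanishing identically on vertical lines through common zeros of its coefficients), so the induction does not close and the contribution of those lines is never bounded. Your alternative (generic linear change of variables) is rightly discarded, since it destroys the grid $S^m$.

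The standard repair replaces the content by the leading coefficient in $X_m$: write $f=\sum_{j=0}^{k} f_j(X_1,\ldots,X_{m-1})X_m^j$ with $f_k\ne 0$, so $\deg f_k\le d-k$. Fix $p'\in S^{m-1}$, let $\mu=\mathrm{mult}(f_k,p')$, and pick a Hasse derivative of order $\mu$ in the first $m-1$ variables that does not annihilate $f_k$ at $p'$; the resulting polynomial $g$ satisfies $\mathrm{mult}(f,(p',a))\le \mu+\mathrm{mult}(g,(p',a))$, and $g(p',X_m)$ is a \emph{nonzero} univariate polynomial of degree at most $k$ (its $X_m^k$-coefficient is the chosen derivative of $f_k$ evaluated at $p'$). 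This yields $\sum_{a\in S}\mathrm{mult}(f,(p',a))\le s\mu+k$ for every $p'$, with no case distinction about vanishing lines. Summing over $p'$ and applying the inductive hypothesis to $f_k$ (in $m-1$ variables, degree $\le d-k$) gives $s\,(d-k)s^{m-2}+k\,s^{m-1}=d\,s^{m-1}$. Note that the factor $s$ multiplying $\mathrm{mult}(f_k,p')$ is exactly what makes the bookkeeping close; your content-based accounting has no analogue of it for the lines it cannot handle.
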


The footprint lemma has also been generalized to take into account multiplicities of roots.

\begin{theorem}[\cite{geil-thomsen-2015}, Theorem 17] \label{thm:footprint-multi}
Let $f(X_1,\ldots,X_m)\in \mathbb{K}[X_1,\ldots,X_m]$ with leading monomial $X^{i_1}\cdots X_m^{i_m}$
and $S_i\subseteq \mathbb{K}^*$, $|S_i|=s_i$ for $1\leq i\leq m$.
Assume that 
\begin{enumerate}
    \item $i_m < r s_m$,
    \item $i_j < s_j \cdot \min \left\lbrace \frac{\sqrt[m-1]{r}-1}{\sqrt[m-1]{r}-\frac{1}{r}}, \frac{\sqrt[m-2]{2}-1}{\sqrt[m-2]{2} - \frac{1}{2}}\right\rbrace$, for $1\leq j\leq m-1$.
\end{enumerate}
Then the number of zeros of $f(X_1,\ldots, X_m)$ in $S_1\times\cdots\times S_m$ with multiplicities
at least $r$ is at most
 \[ s_1\cdots s_m - s_1\cdots s_m \left(s_1-\frac{i_1}{r}\right)\cdots \left(s_m - \frac{i_m}{r}\right).\]
\end{theorem}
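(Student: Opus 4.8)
The plan is to convert the multiplicity count into a footprint count and then induct on $m$. For the conversion, recall that the Hasse derivatives $f^{[\mathbf j]}$ of $f$ are defined by $f(X_1+Y_1,\dots,X_m+Y_m)=\sum_{\mathbf j\in\N^m}f^{[\mathbf j]}(X)\,Y^{\mathbf j}$, and that a point $p$ is a zero of $f$ of multiplicity at least $r$ exactly when $f^{[\mathbf j]}(p)=0$ for every $\mathbf j$ with $|\mathbf j|=j_1+\dots+j_m<r$; unlike for ordinary partial derivatives, this equivalence holds over any field. Adjoining the univariate vanishing polynomials $g_\ell(X_\ell)=\prod_{a\in S_\ell}(X_\ell-a)$, the set of zeros of $f$ of multiplicity $\ge r$ in $S_1\times\dots\times S_m$ is precisely the variety of the ideal $J=\langle f^{[\mathbf j]}:|\mathbf j|<r\rangle+\langle g_1,\dots,g_m\rangle$; hence, exactly as in the proof of Corollary~\ref{coro:number-of-zeros}, its cardinality is at most the number of standard monomials of $J$ for the fixed monomial order, and it suffices to bound that number by the quantity in the statement.

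For $m=1$ there is nothing to do beyond the elementary fact that a nonzero univariate polynomial of degree $i_1$ has at most $i_1/r$ roots of multiplicity $\ge r$, which equals $s_1-(s_1-\tfrac{i_1}{r})$; here $i_1<rs_1$ is needed only to make this bound nonnegative. For the inductive step I would work with a monomial order in which $X_m$ is the smallest variable and slice along $X_m$: for each $a\in S_m$ write the Taylor expansion $f=\sum_{k\ge 0}c_k^{(a)}(X_1,\dots,X_{m-1})(X_m-a)^k$, where $c_k^{(a)}=f^{[(0,\dots,0,k)]}(\cdot,a)$. If $(p',a)$ is a zero of $f$ of multiplicity $\ge r$, then $c_k^{(a)}$ vanishes at $p'$ to order at least $r-k$; in particular $c_0^{(a)}=f(\cdot,a)$ vanishes there to order $\ge r$. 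The leading coefficient of $f$, regarded as a polynomial in $X_1,\dots,X_{m-1}$, is a nonzero polynomial in $X_m$ of degree $i_m$, so for all but at most $i_m$ values of $a$ the leading monomial of $c_0^{(a)}$ is again $X_1^{i_1}\cdots X_{m-1}^{i_{m-1}}$, and the induction hypothesis applies to $c_0^{(a)}$ with the same $r$. Here one uses that hypotheses (1) and (2) are self-propagating: the one-variable functions of $r$ appearing in (2) are monotone in the direction that makes the dimension-$m$ constraints imply the dimension-$(m-1)$ ones needed to invoke the induction hypothesis. The remaining at most $i_m$ values of $a$---those for which $a$ is a root of the leading coefficient above, including the case $f(\cdot,a)\equiv 0$---are treated by passing to the first Taylor coefficient $c_k^{(a)}$ that still has leading monomial $X_1^{i_1}\cdots X_{m-1}^{i_{m-1}}$, with $k$ the multiplicity of $a$ as a root of that leading coefficient, and using the weaker vanishing order $r-k$ it forces at $p'$.

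The routine part is this bookkeeping with Hasse derivatives and leading monomials; the real obstacle is the final combinatorial step, namely assembling the per-slice estimates---the dimension-$(m-1)$ bound on the generic slices and the weaker bounds on the at most $i_m$ exceptional slices, the latter weighted by the constraint that the root multiplicities of the leading coefficient sum to at most $i_m$---into the single product $s_1\cdots s_m\prod_{j=1}^m(1-\tfrac{i_j}{rs_j})$. This comes down to an optimization balancing how many slices may be exceptional, and how exceptional, against how large the generic contribution is, and pushing it through is precisely what forces the cut-offs in hypothesis (2): the nested radicals $\sqrt[m-1]{r}$ and $\sqrt[m-2]{2}$ are the values at which the underlying product inequality---at bottom an arithmetic--geometric mean estimate---becomes tight. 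Once that inequality is established, the theorem follows by induction on $m$.
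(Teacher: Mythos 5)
The paper does not prove this statement at all: it is imported verbatim from Geil--Thomsen \cite{geil-thomsen-2015} (their Theorem 17), so there is no internal proof to compare against and your argument must stand on its own. The routine parts of your sketch are sound: the Hasse-derivative characterization of multiplicity over an arbitrary field, the base case $m=1$, and the observation that a multiplicity-$\ge r$ zero $(p',a)$ of $f$ forces $f(\cdot,a)$ to vanish to order $\ge r$ at $p'$ are all correct, and slicing along the last variable is indeed the natural strategy.

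However, the proposal has a genuine gap, and you name it yourself: the ``final combinatorial step'' of assembling the per-slice estimates into the bound $s_1\cdots s_m\bigl(1-\prod_j(1-\tfrac{i_j}{rs_j})\bigr)$ is exactly where the theorem lives, and it is left as a black box. The hypotheses (1)--(2), with the constants $\frac{\sqrt[m-1]{r}-1}{\sqrt[m-1]{r}-\frac1r}$ and $\frac{\sqrt[m-2]{2}-1}{\sqrt[m-2]{2}-\frac12}$, are not cosmetic cut-offs that ``fall out'' of an AM--GM estimate one can take on faith; they are the delicate numerical conditions under which the balancing of generic versus exceptional slices works, and producing them is the entire content of the Geil--Thomsen proof. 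Moreover, two specific steps in your plan would need real work before the induction even closes: (i) on an exceptional slice you only obtain vanishing of order $r-k$ at $p'$, so the induction hypothesis as stated (with multiplicity parameter $r$) does not apply; for $k$ comparable to $r$ the analogous hypotheses for $r-k$ can fail outright, so you need a separate weighted estimate there, not an invocation of the same statement; (ii) the claim that conditions (2) are ``self-propagating'' from $m$ variables to $m-1$ is asserted without verification, and the degenerate cases (e.g.\ $m=2$, where $\sqrt[m-2]{2}$ is not even defined, and $m-1=1$, where the first ratio becomes $\frac{r-1}{r-\frac1r}$) must be handled explicitly. As written, the proposal is a plausible outline of the known proof strategy, but the step it defers is the one that makes the theorem true with these particular hypotheses, so it cannot be accepted as a proof.
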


\paragraph{Ehrhart Polynomials}
Let $P$ be an $m$-dimensional polytope in $\R^m$. The Ehrhart polynomial of $P$ is the 
function $L_P(\lambda)=|\lambda P\cap \Z^m|$ which counts the integer points
of the $\lambda$-th dilation $\lambda P$  of $P$. It is well known that 
if $P$ is an integral polytope, that is, all its vertex coordinates are integers,
$L_P(\lambda)$ is a polynomial in $\lambda$
of degree $m$ whose leading coefficient equals the volume of $P$.
For any polygon $P\subseteq \R^m$ we define the pyramid $\pyr{P}\subseteq\R^{m+1}$ over 
$P$ as the polytope obtained by taking the convex hull of 
$(0,0,\ldots, 0,1)$ and all vertices of $P$ embedded in the  hyperplane $x_{m+1}=0$ of $\R^{m+1}$,\, 
i.e.,
 $\pyr{P}=\te{conv}\{(0,0,\ldots, 0,1),({\bf v},0): {\bf v} \text{ vertex of } P\}$.
In the next proposition, we make a connection between the Ehrhart polynomial of $P$ and the Ehrhart polynomial of $\pyr{P}$.  
\begin{prop} \label{prop:ehrhart-pyr}
The Ehrhart polynomial of $\pyr{P}$ is
$$L_{\pyr{P}}(\lambda) = \sum_{k=1}^{\lambda}L_P(k).$$
\begin{enumerate}
\item For $\lambda \geq m$, $L_{\pyr{P}}(\lambda) > \binom{\lambda+1}{m+1} m! \vol(P)$.
\item For any $n\in \N$ and $\lambda \geq \max\Big\{m,\; e\,\left(\frac{(m+1)n}{\vol(P)}\right)^{\tfrac{1}{m+1}}\Big\}$, $L_{\pyr{P}}(\lambda) > n$.
\item For any $n, r\in \N$ and $\lambda \geq \max\Big\{m,\; e\,\left(\frac{(m+1)n}{\vol(P)} \binom{m+r}{m+1} \right)^{\tfrac{1}{m+1}}\Big\}$, $L_{\pyr{P}}(\lambda) > n\, \binom{m+r}{m+1}$.
\end{enumerate}
\end{prop}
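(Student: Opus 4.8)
The plan is to establish the recursion for $L_{\pyr{P}}$ by a horizontal--slicing argument, deduce (1) from it via the non-negativity of the $h^*$-vector of $P$ together with the hockey-stick identity, and then read off (2) and (3) as elementary consequences of (1). For the recursion, I would slice the dilated pyramid $\lambda\pyr{P}\subseteq\R^{m+1}$ by the hyperplanes $x_{m+1}=j$. Since the apex of $\lambda\pyr{P}$ is $(\mathbf 0,\lambda)$ --- with vanishing first $m$ coordinates --- while the base is $\lambda P$, the section at height $j$ is exactly the dilate $(\lambda-j)P$, and hence contains $L_P(\lambda-j)$ lattice points; summing over the integer heights $j$ establishes the stated recursion $L_{\pyr{P}}(\lambda)=\sum_{k=1}^{\lambda}L_P(k)$, and in particular $L_{\pyr{P}}(\lambda)\ge\sum_{k=m}^{\lambda}L_P(k)$, which is all that (1)--(3) will use.

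For (1), the heart of the matter is the estimate $L_P(k)>\binom{k}{m}m!\,\vol(P)$ for every integer $k\ge m$, equivalently $L_P(k)>k(k-1)\cdots(k-m+1)\vol(P)$. I would prove it from the expansion $L_P(k)=\sum_{i=0}^{m}h_i^*\binom{k+m-i}{m}$, where $(h_0^*,\dots,h_m^*)$ is the $h^*$-vector of the lattice polytope $P$: Stanley's theorem gives $h_i^*\ge 0$, one has $h_0^*=1$, and $\sum_i h_i^*=m!\,\vol(P)$ is the normalized volume. For $k\ge m$ every binomial $\binom{k+m-i}{m}$ is at least $\binom{k}{m}$, while the $i=0$ term $\binom{k+m}{m}$ is strictly larger, so $L_P(k)>\binom{k}{m}\sum_i h_i^*=\binom{k}{m}m!\,\vol(P)$. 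Summing this over $m\le k\le\lambda$ and invoking the hockey-stick identity $\sum_{k=m}^{\lambda}\binom{k}{m}=\binom{\lambda+1}{m+1}$ yields $L_{\pyr{P}}(\lambda)>\binom{\lambda+1}{m+1}m!\,\vol(P)$ for $\lambda\ge m$, which is (1).

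For (2) it suffices, by (1), to verify that $\binom{\lambda+1}{m+1}m!\,\vol(P)\ge n$ as soon as $\lambda\ge e\bigl((m+1)n/\vol(P)\bigr)^{1/(m+1)}$. Since $\lambda\ge m$, I would bound $\binom{\lambda+1}{m+1}\ge\bigl(\tfrac{\lambda+1}{m+1}\bigr)^{m+1}\ge\bigl(\tfrac{\lambda}{m+1}\bigr)^{m+1}$; inserting the lower bound on $\lambda$ makes the left-hand side at least $\frac{e^{m+1}(m+1)!}{(m+1)^{m+1}}\,n$, and the constant is $\ge 1$ because $(m+1)!\ge\bigl(\tfrac{m+1}{e}\bigr)^{m+1}$. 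Part (3) is then immediate: applying (2) with $n$ replaced by $n\binom{m+r}{m+1}$ turns the hypothesis into precisely the lower bound on $\lambda$ appearing in (3) and the conclusion into $L_{\pyr{P}}(\lambda)>n\binom{m+r}{m+1}$.

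The step I expect to be the real obstacle is the estimate $L_P(k)>\binom{k}{m}m!\,\vol(P)$ underlying (1): one cannot obtain such a volume-weighted lower bound by bounding the Ehrhart coefficients of $P$ termwise, since those coefficients may be negative, so the argument genuinely relies on Stanley's non-negativity of the $h^*$-vector. It is also worth noting that slicing first and applying this to $P$, rather than applying it directly to $\pyr{P}$, is exactly what produces the sharp constant $\binom{\lambda+1}{m+1}$ valid already from $\lambda\ge m$; everything downstream is routine binomial and Stirling-type estimation.
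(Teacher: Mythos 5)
Your proof is correct, and while the recursion and part (1) essentially mirror the paper, your treatment of part (2) takes a genuinely different (and simpler) route. For the recursion and (1), the paper slices the dilated pyramid in exactly the same way and also relies on $h^*$-nonnegativity plus the hockey-stick identity; the only difference is the order of operations: the paper exchanges sums first, getting $\sum_j h_j^*\binom{\lambda+m+1-j}{m+1}$ and then bounding each binomial below by $\binom{\lambda+1}{m+1}$, whereas you first bound $L_P(k)>\binom{k}{m}m!\vol(P)$ for $k\geq m$ and then sum — the two are interchangeable. (Both you and the paper gloss over the apex: the slicing actually gives $L_{\pyr{P}}(\lambda)=\sum_{k=0}^{\lambda}L_P(k)$, including the $k=0$ term for the point $(\mathbf 0,\lambda)$; this off-by-one appears in the paper's own proof as well and only strengthens the inequalities.) In part (2) the paper reduces to $\prod_{i=0}^{m}\frac{\lambda+1-i}{A}\geq 1$ and then splits into the cases $eA\leq m+1$ and $eA>m+1$, handling the second case with a concavity/Riemann-sum estimate for $\int\ln(e-s)\,ds$. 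Your argument replaces all of this with the single elementary bound $\binom{\lambda+1}{m+1}\geq\bigl(\tfrac{\lambda+1}{m+1}\bigr)^{m+1}$ (valid since $\lambda\geq m$) combined with $(m+1)!\geq\bigl(\tfrac{m+1}{e}\bigr)^{m+1}$ — the same Stirling-type inequality the paper uses in its first case — and thereby treats both regimes uniformly, with no case distinction and no integral estimate. Part (3) is handled identically in both, by substituting $n\binom{m+r}{m+1}$ for $n$. So your write-up is not only valid but streamlines the paper's proof of (2).
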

\begin{proof}
From the definition of $\pyr{P}$, we see that $\Z^{m+1}\cap \pyr{\lambda P}$ is the disjoint union of
$\Z^m \cap k P$, for $1\leq k\leq \lambda$. Therefore, 
\[ L_{\pyr{P}}(\lambda) = \left|\Z^{m+1} \cap \pyr{\lambda P}\right|
 = \sum_{k=1}^{\lambda} \left|\Z^m \cap k P\right| = \sum_{k=0}^{\lambda} L_P(k).\]

It is known, that $L_P(k) = \sum_{j=0}^m h_j^*\ \binom{k+m-1}{m}$ for non-negative $h_j^{*}$, such that
$\sum_{j=0}^m h_j^* = m! \vol(P)$. We have
\begin{eqnarray*}
    \sum_{k=1}^{\lambda} L_P(k) &=& \sum_{j=0}^m h_j^*\ \sum_{k=1}^{\lambda}\binom{k+m-j}{m} \\
    &=&\sum_{j=0}^m h_j^*\ \binom{\lambda+m+1-j}{m+1} \\
    &>& \binom{\lambda+1}{m+1}\ m!\ \vol(P),
    \end{eqnarray*}

 To prove the second inequality, let $A = \Big(\frac{(m+1)\,n}{\operatorname{vol}(P)}\Big)^{\tfrac{1}{m+1}}$, $x = \lambda+1$ and note that the desired inequality if equivalent to 
 \begin{equation} \label{eq:star}
     \prod_{i=0}^{m}\frac{x-i}{A}\ \ge\ 1.
 \end{equation} 
 By hypothesis we have $x\ge\max\{m+1,eA\}$.

\medskip

\noindent
 If $eA\le m+1$, then $x\ge m+1$ and the left hand side of \eqref{eq:star} is lower bounded by
\[
   \prod_{i=0}^{m}\frac{m+1-i}{A}=\frac{(m+1)!}{A^{m+1}}.
\]
Using the elementary bound $(m+1)!\ge ((m+1)/e)^{m+1}$ we get
\[
	\frac{(m+1)!}{A^{m+1}} \ge \left(\frac{m+1}{eA}\right)^{m+1} \ge 1,
\]
and \eqref{eq:star} holds.

\medskip

\noindent
If $eA>m+1$, then $x\ge eA$ and it suffices to check \eqref{eq:star} with $x = eA$. The left-hand side of \eqref{eq:star} is
\[
	\prod_{i=0}^{m}\left(e-\frac{i}{A}\right).
\]
Take logarithms and set
\[
	S:=\sum_{i=0}^{k-1}\ln\!\Big(e-\frac{i}{A}\Big).
\]
It suffices to show that $S\ge 0$. The function $f(s):=\ln(e-s)$ defined for $s\in[0,e)$ is decreasing and concave on $[0,e)$.
Let $\alpha:=(m+1)/A>0$. The sum $S$ is a left Riemann sum (with mesh $1/A$) for $f$ on $[0,\alpha]$:
\[
	S=\sum_{i=0}^{k-1} f\!\Big(\frac{i}{A}\Big).
\]
Since $f$ is decreasing, the left Riemann sum $\frac{1}{A}S$ over $[0,\alpha]$ is an upper Riemann sum and therefore
\[
	\frac{1}{A}S \;\ge\; \int_{0}^{\alpha} f(s)\,ds,
	\quad\text{hence}\quad
	S \;\ge\; A\int_{0}^{\alpha}\ln(e-s)\,ds.
\]
Since $\alpha=(m+1)/A<e$, the integrand is nonnegative and
consequently 
\[
	S \ge A\int_{0}^{\alpha}\ln(e-s)\,ds \ge 0,
\]
which proves \eqref{eq:star} in this case as well.

The proof of the third inequality follows from the same arguments, upon setting $A = \left(\frac{(m+1)n}{\vol(P)} \binom{m+r}{m+1}\right)^{\tfrac{1}{m+1}}$
\end{proof}

\begin{prop} \label{prop:support-F}
Let $I=\lambda \,{\rm Pyr}(P)$, and
\[ 
Q(\vX, Y)=\sum\limits_{(i_1,\ldots,i_m,k)\,\in\, I}q_{i_1,\ldots,i_m,k} X_1^{i_1}\cdots X_m^{i_m} Y^k.
\]
Then the support of every polynomial $F(\vX)=Q(\vX,f(\vX))$ is contained in 
$\lambda\, P$. In other words, $F(\vX)\in \cL_q(\lambda\, P)$. 
\end{prop}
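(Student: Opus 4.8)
The plan is to reduce everything to the multiplicativity of Newton polytopes together with a slicewise description of $\pyr{P}$. Throughout we use that $f\in\cL_q(P)$, i.e.\ $\np{f}\subseteq P$. After substituting $Y=f(\vX)$, a monomial of $F(\vX)=Q(\vX,f(\vX))$ can only come from one of the terms $q_{i_1,\dots,i_m,k}\,X_1^{i_1}\cdots X_m^{i_m}\,f(\vX)^k$ with $(i_1,\dots,i_m,k)\in I=\lambda\,\pyr{P}$, so the support of $F$ is contained in the union, over all such $(i_1,\dots,i_m,k)$, of the supports of these terms. Writing $\mathbf i=(i_1,\dots,i_m)$ and using $\np{gh}=\np g+\np h$ iteratively, the Newton polytope of $X_1^{i_1}\cdots X_m^{i_m}f(\vX)^k$ equals $\mathbf i+k\,\np{f}$, which is contained in $\mathbf i+kP$; in particular its support lies in $\mathbf i+kP$. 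Hence it suffices to prove that $\mathbf i+kP\subseteq\lambda P$ whenever $(\mathbf i,k)\in\lambda\,\pyr{P}$.

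Next I would describe the horizontal slices of $\pyr{P}$. A point of $\pyr{P}$ is a convex combination $\mu_0(0,\dots,0,1)+\sum_j\mu_j(\mathbf v_j,0)$ of the apex and the lifted vertices $\mathbf v_j$ of $P$; its last coordinate is $\mu_0$ and its first $m$ coordinates form $\sum_j\mu_j\mathbf v_j=(1-\mu_0)\sum_j\tfrac{\mu_j}{1-\mu_0}\mathbf v_j\in(1-\mu_0)P$ for $\mu_0<1$, while the case $\mu_0=1$ is the apex and $(1-\mu_0)P=\{0\}$. Conversely every point of $(1-t)P$ at height $t\in[0,1]$ arises this way. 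Thus the slice of $\pyr{P}$ at height $t\in[0,1]$ is exactly $(1-t)P$, and scaling by $\lambda$ gives the characterization: $(\mathbf i,k)\in\lambda\,\pyr{P}$ if and only if $0\le k\le\lambda$ and $\mathbf i\in(\lambda-k)P$.

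Combining the two steps, for $(\mathbf i,k)\in\lambda\,\pyr{P}$ we have $\mathbf i\in(\lambda-k)P$, hence
\[ \mathbf i+kP\ \subseteq\ (\lambda-k)P+kP\ =\ \lambda P, \]
where the last equality is the elementary identity $aP+bP=(a+b)P$, valid for any convex set $P$ and any $a,b\ge 0$; the nontrivial inclusion $\subseteq$ follows by writing $a\mathbf x+b\mathbf y=(a+b)\bigl(\tfrac{a}{a+b}\mathbf x+\tfrac{b}{a+b}\mathbf y\bigr)$ when $a+b>0$, and is trivial otherwise. Therefore every exponent vector occurring in $F$ lies in $\lambda P$, and being integral it lies in $\lambda P\cap\Z^m$; that is, $F(\vX)\in\cL_q(\lambda P)$.

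There is no serious obstacle here: the only point needing a little care is the slicewise identification $\lambda\,\pyr{P}\cap\{x_{m+1}=k\}=(\lambda-k)P$, including the degenerate top slice $k=\lambda$, where $(\lambda-k)P=\{0\}$ forces $\mathbf i=\mathbf 0$ — consistent with the fact that the only point of $I$ at maximal height is the scaled apex. Once this slicing is established, convexity of $P$ (through $aP+bP=(a+b)P$) immediately delivers the desired containment.
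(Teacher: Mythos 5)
Your proof is correct and follows essentially the same route as the paper's: both decompose $F$ into the terms $X_1^{i_1}\cdots X_m^{i_m}f(\vX)^k$, identify the slice condition $\mathbf i\in(\lambda-k)P$ for $(\mathbf i,k)\in\lambda\,\pyr{P}$, and conclude via the Minkowski-sum containment $(\lambda-k)P+kP=\lambda P$. You merely make explicit two steps the paper leaves implicit (the slicewise description of the pyramid and the convexity identity $aP+bP=(a+b)P$), which is fine.
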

\begin{proof}
We have
\begin{align}
 F(\vX) =Q(\vX,f(\vX))  
       &=\sum\limits_{(i_1,\ldots,i_m,k)\in I}q_{i_1,\ldots,i_m,k} 
       X_1^{i_1}\cdots X_m^{i_m} f(X_1,\ldots,X_m)^k \notag \\
       & =\sum_{k=0}^{\lambda}\sum_{(i_1,\ldots, i_m)\in (\lambda-k)P } q_{i_1,\ldots,i_m,k}
       X_1^{i_1}\cdots X_m^{i_m} f(X_1,\ldots, X_m)^{k}. \label{monom}
\end{align}
To compute the support of $F(\vX)$ we consider the monomials appearing in each 
$\sum\limits_{(i_1,\ldots, i_m)\in (\lambda-k)P }q_{i_1,\ldots,i_m,k} X_1^{i_1}\cdots X_m^{i_m} f(X_1,\ldots,X_m)^{k}$
of \eqref{monom}. Since the support  of each $X_1^{i_1}\cdots X_m^{i_m}$ lies in $(\lambda-k)P$
and that of $f(X_1,\ldots, X_m)^{k}$ lies in $kP$, the support of their
product lies in the Minkowski sum $(\lambda-k)P+kP=\lambda P$. 
This holds for all $0\leq k\leq \lambda$, which further implies that the support 
of \eqref{monom} lies in $\lambda P$, as well. 
\end{proof}

\section{Basic Method} \label{sec:basic}
We are given an evaluation code $\co{P}$, a point $y = (y_1,\ldots,y_n)\in \ff{q}^n$
and $t\in \N$. Our task is to compute all polynomials $f\in \cL_q(P)$, such 
that $\Delta(\ev(f), y) \leq n-t$ (equivalently, $f(p_i)=y_i$ for at least $t$ points $p_i \in S$).

For brevity, we denote by $\vX$ the "vector" of variables $(X_1,\ldots, X_m)$.
Following the work of Sudan \cite{Sud_DRS97}, our strategy is to construct an 
auxiliary non-zero polynomial $Q(\vX,Y) \in \ff{q}[\vX,Y]$ with the property:
\begin{align}
f(\vX)\in \cL_{q}(P) \mbox{\, and \,} \Delta(\ev(f), y)\leq n-t  \Longrightarrow\ 
Q(\vX,f(\vX)) \equiv 0.
\label{Sudan}
\end{align}
Given such a polynomial $Q(\vX,Y)$, the polynomials $f(\vX)\in \cL_{q}(P)$
with $\Delta(\ev(f), y)\leq n-t$ can be computed as roots of $Q(\vX,Y)$, viewed as
a polynomial in $\ff{q}(\vX)[Y]$. We note, that $Q(\vX, Y)$ may have roots $g(\vX)$
that do not satisfy the required conditions. It is an easy computational task
to check those conditions for each root of $Q(\vX, Y)$.

To construct $Q(\vX,Y)$ we write
\begin{align}
  Q(\vX,Y) = \sum_{(i_1,\ldots,i_m,k)\in I} q_{i_1,\ldots,i_m,k}\, X_1^{i_1}\cdots X_m^{i_m} Y^k \in \ff{q}[\vX,Y]
  \label{polQ}
\end{align}
where $I\subseteq\Z^{m+1}$ is the support of $Q$. The algorithm works in two stages. Note
that the parameter $t$ is given implicitely, as part of the index set $I$.

\begin{algo}{Basic Method}
{Polytope $P$, set of points $S=\{p_1,\ldots,p_n\}\subseteq (\ff{q}^*)^m$, 
point $y=(y_1,\ldots,y_n)\in \ff{q}^n$, index set $I\subseteq \Z_{\geq 0}^{m+1}$.}
{Every $f\in \cL_q(P)$ such that $f(p_j) = y_j$ for at least $t$ points in $S$.}
\begin{enumerate}
\item Compute a non-zero solution of the linear system
\begin{equation} \label{eq:defQ}
  Q(p_{j},y_{j}) = 0,\ \ 1\leq j\leq n. 
\end{equation}
\item Compute the roots of $Q(\vX, Y)$, viewed as a polynomial in $\ff{q}(\vX)[Y]$,
and output the roots that lie in $\cL_q(P)$.
\end{enumerate}
\end{algo}

\begin{theorem} \label{thm:basic-method}
Let $P\subseteq \R^m$ be a lattice polytope, $\ff{q}$ be the finite field with $q$ elements,
$S=\{p_1,\ldots, p_n\}\subseteq (\ff{q}^*)^m$ and let $C_P$ be the related evaluation code.
Let $Q(\vX, Y)$ be the polynomial defined in Equation \ref{polQ}. Assume 
\begin{enumerate}
\item $I\subseteq\Z^{m+1}$, with $|I| > n$,
\item For every $f(\vX)\in \cL_q(P)$, the polynomial $Q(\vX, f(\vX))$ has less than $t$
zeros in $S$.
\end{enumerate}
Then the {\it Basic Method} solves the Polynomial Reconstruction Problem using $O(|I|^3)$ operations in $\ff{q}$.
\end{theorem}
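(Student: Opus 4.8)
The plan is to verify, in order, that Stage~1 produces a usable auxiliary polynomial, that this polynomial forces every sought $f$ to be one of finitely many roots, and that the two stages together cost $O(|I|^3)$ operations in $\ff{q}$. For Stage~1, the conditions $Q(p_j,y_j)=0$ of \eqref{eq:defQ} form a homogeneous linear system in the $|I|$ unknown coefficients $q_{i_1,\dots,i_m,k}$, with exactly $n$ equations; since $|I|>n$ by hypothesis~(1), the coefficient matrix has a kernel of dimension at least $|I|-n>0$, so a non-zero $Q(\vX,Y)$ exists and is produced by Gaussian elimination on an $n\times|I|$ matrix, at a cost of $O(n^2|I|)$, hence $O(|I|^3)$, operations in $\ff{q}$ (using $n<|I|$).

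The heart of the argument is the implication \eqref{Sudan}. Suppose $f\in\cL_q(P)$ satisfies $f(p_i)=y_i$ for at least $t$ indices $i$, and set $F(\vX):=Q(\vX,f(\vX))$. For each such $i$ we get $F(p_i)=Q(p_i,f(p_i))=Q(p_i,y_i)=0$, so $F$ has at least $t$ zeros in $S$. By hypothesis~(2), a non-zero polynomial of the form $Q(\vX,f(\vX))$ has fewer than $t$ zeros in $S$; hence $F\equiv 0$. Read in $\ff{q}(\vX)[Y]$, this says precisely that $f(\vX)\in\ff{q}[\vX]\subseteq\ff{q}(\vX)$ is a root of $Q$, i.e. $Y-f(\vX)$ divides $Q(\vX,Y)$ over the function field $\ff{q}(\vX)$. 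Thus every solution of the Polynomial Reconstruction Problem lies among the roots of $Q$ in $\ff{q}(\vX)$, a finite set of size at most $\deg_Y Q$. (When $Q$ happens not to involve $Y$ it is a non-zero element of $\ff{q}[\vX]$, and the same computation with hypothesis~(2) shows no admissible $f$ exists, so the empty output is correct; in the intended applications this degenerate case is ruled out by taking $I=\lambda\,\pyr{P}$ with $\lambda$ large enough that Proposition~\ref{prop:support-F} and Corollary~\ref{coro:number-of-zeros} deliver hypothesis~(2).)

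For Stage~2 I would compute those roots, keep the ones lying in $\cL_q(P)$, and, to return exactly the required list, discard any candidate for which $f(p_i)=y_i$ fails to hold at $t$ points --- a check costing $O(n|I|)$ per candidate that removes the spurious roots of $Q$. By the previous paragraph the surviving set is exactly the solution set of the Polynomial Reconstruction Problem, so the algorithm is correct. The step I expect to be the real obstacle is bounding the cost of Stage~2: one must find the roots in $\ff{q}(\vX)$ of the polynomial $Q$, presented by its $|I|$ coefficients, and argue that a Roth--Ruckenstein-type root-finding procedure (together with its recursive extension to several $\vX$-variables) accomplishes this within $O(|I|^3)$ operations in $\ff{q}$. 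Granting that, the overall running time is $O(|I|^3)$, dominated by the linear algebra of Stage~1.
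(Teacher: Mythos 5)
Your argument matches the paper's proof: the same kernel-dimension count in Stage 1 (a non-zero $Q$ exists since $|I|>n$), the same observation that $F(\vX)=Q(\vX,f(\vX))$ vanishes at the $\geq t$ agreement points and hence, by hypothesis (2) (read, as the paper intends, for non-identically-zero $F$), must be identically zero, so that $f$ is a root of $Q$ in $\ff{q}(\vX)[Y]$. The only piece you left open --- the $O(|I|^3)$ cost of the root-finding stage --- is exactly what the paper settles by citation, using Wu's root-finding algorithm with the $O(N^3)$ operation count from Pellikaan--Wu, where $N=|I|$ is the number of terms of $Q$, so no further idea is missing.
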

\begin{proof}
The assumption $|I|>n$ ensures that a non-zero solution of \eqref{eq:defQ} exists.
Let $f\in \cL_q(P)$ be a polynomial with $f(p_{j}) = y_{j}$ for at least $t$
points $p_{j}\in S$. For those points we have $Q(p_{j},f(p_{j})) = Q(p_{j},y_{j}) = 0$
or, equivalently, that the polynomial $F(\vX) = Q(\vX,f(\vX))$ has at least $t$ zeros in $S$. 
The second assumption implies that $F(\vX)$ must be identically zero. Equivalently, 
$f(\vX)$ is a root of the polynomial $Q(\vX, Y)\in \ff{q}(\vX)[Y]$.

Regarding the time complexity of the algorithm, step 1 amounts to solving a linear system of
$n$ variables and $I$ equations. This can be done
with $O(|I|^3)$ operations in $\ff{q}$ using standard Gauss elimination.
In step 2, the roots of $Q(\vX, Y)$ can be computed using the
algorithm in \cite{Wu2002}, using $O(N^3)$ operations in $\ff{q}$, as shown in \cite{pel-wu04},
where $N$ is the number of terms in $Q(\vX, Y)$. As shown above, $N = |I|$ and the total time complexity is as claimed.
\end{proof}

The crucial parameter in this approach, is the index set $I$, which has to
be chosen so that 
\begin{enumerate}
    \item $|I| > n$, and 
    \item every polynomial $Q(\vX, f(\vX))$ for $f\in \cL(P)$, that is not 
    identically zero, has less than $t$ roots in $S$.
\end{enumerate}

The vital difference when comparing to Sudan's method is the 
fact that, unlike the case of univariate polynomials, one cannot always compute
tight  upper bounds for the number of roots of multivariate polynomials. 
In fact, the number of roots of a multivariate polynomial is  
strongly  related to the geometry of its support or, equivalently, the 
geometry of its Newton polytope.

We apply the method outlined in Theorem ~\ref{thm:basic-method}, 
for $I = \lambda \pyr{P}$, where $\lambda$ is a parameter to be specified later.

\begin{theorem}\label{thm:basic1}
    Let $P\subseteq \R^m$ be a lattice polytope, $\ff{q}$ be the finite field with $q$ elements,
    $S=\{p_1,\ldots, p_n\}\subseteq (\ff{q}^*)^m$ and let $C_P$ be the related evaluation code. 
    Let $(i_1,\ldots, i_m)\in P$ be a point that maximizes the sum $i_1+\cdots +i_m$
    and assume that $S$ contains the box $S_1\times \cdots \times S_m$, with $|S_j| = s_j > i_j$,
    $1\leq j\leq m$. Then, there exists a
    polynomial time algorithm that solves the polynomial reconstruction problem,
    for any $0< \lambda < \min_{1\leq j\leq m} s_j/i_j$ and any integer $t$ such that 
    \[ n < \sum_{k=0}^{\lambda} L_P(k)\]
    and
    \[ n - \prod_{j=1}^m(s_j-\lambda i_j) < t. \]
\end{theorem}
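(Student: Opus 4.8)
The plan is to instantiate the Basic Method (Theorem~\ref{thm:basic-method}) with index set $I=\lambda\,\pyr{P}\cap\Z^{m+1}$ and with the auxiliary polynomial $Q(\vX,Y)$ of~\eqref{polQ} supported on $I$; the argument then reduces to verifying the two hypotheses of that theorem and estimating $|I|$.

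For hypothesis~(1) one needs $|I|>n$. By the definition of the Ehrhart polynomial and Proposition~\ref{prop:ehrhart-pyr}, $|I|=L_{\pyr{P}}(\lambda)=\sum_{k=0}^{\lambda}L_P(k)$, which is $>n$ by the first displayed assumption; in particular a nonzero $Q$ exists. For hypothesis~(2) one must check that for every $f\in\cL_q(P)$ the polynomial $F:=Q(\vX,f(\vX))$, if not identically zero, has fewer than $t$ zeros in $S$. This is where Proposition~\ref{prop:support-F} is used: it gives $\np{F}\subseteq\lambda P$, and since $P$ is contained in the box $[0,i_1]\times\cdots\times[0,i_m]$ we get $\np{F}\subseteq\lambda P\subseteq[0,\lambda i_1]\times\cdots\times[0,\lambda i_m]$. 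Corollary~\ref{coro:number-of-zeros} now applies with $a_j=0$, $b_j=\lambda i_j$, hence $\ell_j=\lambda i_j$; the hypothesis $0<\lambda<\min_{1\le j\le m}s_j/i_j$ is exactly what guarantees $s_j>\lambda i_j$ for all $j$, so the corollary bounds the number of zeros of $F$ in $S$ by $|S|-\prod_{j=1}^m(s_j-\lambda i_j)=n-\prod_{j=1}^m(s_j-\lambda i_j)$, which is $<t$ by the second displayed assumption. With both hypotheses in hand, Theorem~\ref{thm:basic-method} shows that the Basic Method, run with this $I$, solves the polynomial reconstruction problem in $O(|I|^3)$ operations in $\ff{q}$.

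It remains to observe that the resulting algorithm runs in polynomial time. Here I would estimate $|I|=\sum_{k=0}^{\lambda}L_P(k)$: for $0\le k\le\lambda$ one has $L_P(k)=|kP\cap\Z^m|\le\prod_{j=1}^m(\lambda i_j+1)$, and since $\lambda<\min_j s_j/i_j$ forces $\lambda i_j<s_j$ (and $\lambda<n$), this gives $|I|\le(\lambda+1)\prod_{j=1}^m(s_j+1)=O(n^2)$ for fixed $m$, so $O(|I|^3)$ is polynomial in the input size. The only delicate step is hypothesis~(2): the key is that substituting $f$ (whose support lies in $P$) into $Q$ yields a polynomial whose support lies in $\lambda P$ --- precisely the content of Proposition~\ref{prop:support-F} --- so that the footprint bound of Corollary~\ref{coro:number-of-zeros}, governed by the bounding box of the Newton polytope, is applied to $\lambda P$ rather than to a coarser total-degree estimate; the rest is bookkeeping.
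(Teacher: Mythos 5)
Your overall framework (take $I=\lambda\,\pyr{P}\cap\Z^{m+1}$, count $|I|$ via Proposition~\ref{prop:ehrhart-pyr}, bound the zeros of $F=Q(\vX,f(\vX))$ using $\np{F}\subseteq\lambda P$, then invoke Theorem~\ref{thm:basic-method}) matches the paper, but the key step in verifying hypothesis~(2) has a genuine gap. You assert that $P\subseteq[0,i_1]\times\cdots\times[0,i_m]$ because $(i_1,\ldots,i_m)$ maximizes the coordinate sum over $P$. That inclusion is false in general: maximizing $i_1+\cdots+i_m$ does not make $i_j$ the maximum of the $j$-th coordinate over $P$. For instance, for $P=\mathrm{conv}\{(0,0),(3,0),(0,2)\}$ the sum-maximizer is $(3,0)$, yet $P\not\subseteq[0,3]\times[0,0]$. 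Consequently you are not entitled to apply Corollary~\ref{coro:number-of-zeros} with $\ell_j=\lambda i_j$; the corollary is governed by the bounding box of $\np{F}\subseteq\lambda P$, whose side lengths are $\lambda\ell_j$ with $\ell_j$ the projection lengths of $P$, which can strictly exceed $i_j$. What your argument actually proves is Theorem~\ref{thm:basic2} (the variant stated in terms of the projection lengths $\ell_j$), not Theorem~\ref{thm:basic1}; the two coincide only when the sum-maximizer dominates $P$ coordinatewise (e.g., when $P$ is itself a box).

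The paper's proof avoids this by using the maximality of $i_1+\cdots+i_m$ differently: with respect to a degree-compatible monomial order, $X_1^{i_1}\cdots X_m^{i_m}$ is a leading monomial of $f$, hence $X_1^{\lambda i_1}\cdots X_m^{\lambda i_m}$ is a leading monomial of $F$, and the footprint bound of Theorem~\ref{thm:footprint} (applied directly to the leading monomial, not to a bounding box, and extended from the box $S_1\times\cdots\times S_m$ to $S$ exactly as in the proof of Corollary~\ref{coro:number-of-zeros}) gives the stated bound $n-\prod_{j=1}^m(s_j-\lambda i_j)$ on the zeros of $F$ in $S$, with the condition $\lambda<\min_j s_j/i_j$ ensuring the hypothesis $\lambda i_j<s_j$ of that theorem. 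A secondary remark: your size estimate $L_P(k)\le\prod_j(\lambda i_j+1)$ used for the polynomial-time claim rests on the same unjustified box containment; it should instead be phrased via the bounding box of $P$ inside $[0,q-2]^m$ (or the projection lengths $\ell_j$), which still yields a bound on $|I|$ polynomial in the input size.
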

\begin{proof}
    We apply Theorem~\ref{thm:basic-method} for $I = \lambda \pyr{P}$, where $\lambda>0$ is a real
    parameter. By Proposition~\ref{prop:ehrhart-pyr}, $| \lambda \pyr{P} \cap \Z^{m+1} | = \sum_{k=0}^{\lambda} L_P(k)$.
    
    Next, we bound the number of zeros of the polynomial $F(\vX) = Q(\vX, f(\vX))\in \ff{q}[\vX]$, where
    $f$ is any polynomial in $\cL_q(P)$. The Newton polytope of $F$ is $\lambda P$, and the maximality
    of the sum $i_1+\cdots + i_m$ implies that $X_1^{i_1}\cdots X_m^{i_m}$ is a leading monomial of $f$.
    Therefore $X_1^{\lambda i_1}\cdots X_m^{\lambda i_m}$ is a leading monomial of $F$ and 
    Theorem~\ref{thm:footprint} implies that $F$ has at most 
    \[ n - \prod_{j=1}^m(s_j-\lambda i_j)\] 
    zeros in $S$. Since the polynomial $F$ is zero for at least $t$ evaluation points, 
    the condition 
    \[ n - \prod_{j=1}^m(s_j-\lambda i_j) < t \]
    implies that $F$ is identically zero, 
    that is, $f(\vX)$ is a zero of the polynomial $Q(\vX, Y)$ viewed as a polynomial in $Y$ over the field
    $\ff{q}(\vX)$.
\end{proof}
One may use Corollary~\ref{coro:number-of-zeros} instead of Theorem~\ref{thm:footprint}, to obtain
the following Theorem.
\begin{theorem}\label{thm:basic2}
    Let $P\subseteq \R^m$ be a lattice polytope, $\ff{q}$ be the finite field with $q$ elements,
    $S=\{p_1,\ldots, p_n\}\subseteq (\ff{q}^*)^m$ and let $C_P$ be the related evaluation code. 
    Denote by $\ell_j$ the length of the projection of $P$ on the $j$-axis.
    Assume that $S$ contains the box $S_1\times \cdots \times S_m$, with $|S_j| = s_j > \ell_j$,
    $1\leq j\leq m$. Then, there exists a
    polynomial time algorithm that solves the polynomial reconstruction problem,
    for any $0< \lambda < \min_{1\leq j\leq m} s_j/\ell_j$ and any integer $t$ such that 
    \[ n < \sum_{k=0}^{\lambda} L_P(k).\]
    and
    \[ n - \prod_{j=1}^m(s_j-\lambda \ell_j) < t \]
\end{theorem}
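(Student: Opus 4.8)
The plan is to mirror the proof of Theorem~\ref{thm:basic1}, substituting Corollary~\ref{coro:number-of-zeros} for Theorem~\ref{thm:footprint} at the point where the number of zeros of $F(\vX) = Q(\vX, f(\vX))$ is bounded. As before, we invoke Theorem~\ref{thm:basic-method} with index set $I = \lambda\pyr{P}$. Proposition~\ref{prop:ehrhart-pyr} gives $|I| = |\lambda\pyr{P}\cap\Z^{m+1}| = \sum_{k=0}^{\lambda} L_P(k)$, so the hypothesis $n < \sum_{k=0}^{\lambda} L_P(k)$ is exactly condition~(1) of Theorem~\ref{thm:basic-method}, ensuring a non-zero solution $Q$ of the linear system~\eqref{eq:defQ} exists.

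Next I would verify condition~(2) of Theorem~\ref{thm:basic-method}, namely that for every $f\in\cL_q(P)$ the polynomial $F(\vX)=Q(\vX,f(\vX))$, if not identically zero, has fewer than $t$ zeros in $S$. By Proposition~\ref{prop:support-F}, the Newton polytope of $F$ is contained in $\lambda P$. Since $\ell_j$ is the length of the projection of $P$ onto the $j$-th axis, the projection of $\lambda P$ onto that axis has length $\lambda\ell_j$; hence, after a translation, $\np{F}$ is contained in a box $[a_1,b_1]\times\cdots\times[a_m,b_m]$ with $b_j - a_j \le \lambda\ell_j$. The constraint $0 < \lambda < \min_j s_j/\ell_j$ guarantees $s_j > \lambda\ell_j \ge b_j - a_j$, so the hypotheses of Corollary~\ref{coro:number-of-zeros} are met (with $\ell_j$ there replaced by $\lambda\ell_j$). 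That corollary then bounds the number of zeros of $F$ in $S$ by $|S| - \prod_{j=1}^m(s_j - \lambda\ell_j) = n - \prod_{j=1}^m(s_j - \lambda\ell_j)$.

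The hypothesis $n - \prod_{j=1}^m(s_j - \lambda\ell_j) < t$ then forces any such $F$ with at least $t$ zeros in $S$ to vanish identically, which is precisely condition~(2). Therefore Theorem~\ref{thm:basic-method} applies and the \emph{Basic Method} solves the Polynomial Reconstruction Problem; since $|I| = \sum_{k=0}^{\lambda} L_P(k)$ is polynomially bounded in $n$ and $q$, the running time $O(|I|^3)$ is polynomial, giving the claimed polynomial-time algorithm.

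I do not anticipate a genuine obstacle here, since the argument is structurally identical to Theorem~\ref{thm:basic1}; the only point requiring care is the bookkeeping with the box containing $\np{F}$. Specifically, one must be careful that the relevant quantity in Corollary~\ref{coro:number-of-zeros} is the edge length $b_j - a_j$ of the bounding box of $\np{F}$, which equals $\lambda\ell_j$ (the scaled projection length of $P$), and \emph{not} a coordinate of a leading monomial as in Theorem~\ref{thm:basic1}. This is exactly why the bound here, $n - \prod_j(s_j - \lambda\ell_j)$, is in general weaker than the one in Theorem~\ref{thm:basic1}, where the exponent vector maximizing $i_1+\cdots+i_m$ can have each $i_j$ strictly smaller than $\ell_j$; the trade-off is that the present statement needs no assumption about which monomial is leading.
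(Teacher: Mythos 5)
Your proof is correct and follows exactly the route the paper intends: the paper gives no separate proof of this theorem, only the remark that one repeats the argument of Theorem~\ref{thm:basic1} with Corollary~\ref{coro:number-of-zeros} in place of Theorem~\ref{thm:footprint}, which is precisely what you do (including the correct bookkeeping that the bounding box of $\np{F}\subseteq\lambda P$ has edge lengths $\lambda\ell_j$). Your closing remark contrasting $\lambda\ell_j$ with the leading-monomial exponents $\lambda i_j$ of Theorem~\ref{thm:basic1} is an accurate observation about the trade-off between the two statements.
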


It is possible to obtain a value of $\lambda$ and the corresponding $t$, under reasonable assumptions
on the geometry of the polytope $P$. Sharper results may be obtained if the polytope is given
and further assumptions are made on the evaluation set $S$, for instance, that $S$ is a box of and therefore
$n = s^m$ for some suitably large $s$.

\begin{theorem}\label{thm:basic3}
    Let $P\subseteq \R^m$ be a lattice polytope, $\ff{q}$ be the finite field with $q$ elements,
    $S=\{p_1,\ldots, p_n\}\subseteq (\ff{q}^*)^m$ and let $C_P$ be the related evaluation code. 
    Denote by $\ell_j$ the length of the projection of $P$ on the $j$-axis.
    Assume that $S$ contains the box $S_1\times \cdots \times S_m$, with $|S_j| = s_j > \ell_j$,
    $1\leq j\leq m$. Let $\lambda = \left\lceil e\,\left(\frac{(m+1) n}{\vol(P)}\right)^{\frac{1}{m+1}}\right\rceil$. Further assume that $\min_{1\leq j\leq m} s_j/\ell_j > \lambda \geq m$.
    Then there exists a polynomial-time algorithm that solves the polynomial reconstruction problem for $t \geq n - \prod_{j=1}^m(s_j-\lambda \ell_j)$.
\end{theorem}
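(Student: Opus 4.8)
The plan is to obtain Theorem~\ref{thm:basic3} as the explicit instance of Theorem~\ref{thm:basic2} in which the free parameter is fixed to $\lambda=\left\lceil e\,\left(\frac{(m+1)n}{\vol(P)}\right)^{1/(m+1)}\right\rceil$. Concretely, I would check that this $\lambda$ meets the two hypotheses of Theorem~\ref{thm:basic2}, namely $0<\lambda<\min_{1\le j\le m}s_j/\ell_j$ and $n<\sum_{k=0}^{\lambda}L_P(k)$, and then read off its conclusion together with the stated threshold on $t$ and the polynomial running time.

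For the range constraint: $P$ is full-dimensional (otherwise $\vol(P)=0$ and the statement is vacuous), so each $\ell_j$ is a positive integer and $\vol(P)>0$; hence $\lambda$ is well defined, the bound $\lambda\ge m\ge 1>0$ settles the left inequality, and the assumption $\min_j s_j/\ell_j>\lambda$ is exactly the right inequality. For the cardinality constraint I would invoke Proposition~\ref{prop:ehrhart-pyr}: its identity $\sum_{k=0}^{\lambda}L_P(k)=L_{\pyr{P}}(\lambda)$ together with part~(2) gives $\sum_{k=0}^{\lambda}L_P(k)=L_{\pyr{P}}(\lambda)>n$ as soon as $\lambda\ge\max\bigl\{m,\ e(\tfrac{(m+1)n}{\vol(P)})^{1/(m+1)}\bigr\}$. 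This is precisely where the two halves of the hypothesis on $\lambda$ enter: $\lambda\ge m$ is assumed, and $\lambda\ge e(\tfrac{(m+1)n}{\vol(P)})^{1/(m+1)}$ holds automatically because $\lambda$ is the ceiling of that quantity. Thus $|I|=|\lambda\pyr{P}\cap\Z^{m+1}|=L_{\pyr{P}}(\lambda)>n$.

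With both hypotheses in place, Theorem~\ref{thm:basic2} yields a polynomial-time algorithm solving the polynomial reconstruction problem for every integer $t>n-\prod_{j=1}^m(s_j-\lambda\ell_j)$; by Theorem~\ref{thm:basic-method} the cost is $O(|I|^3)$ operations in $\ff{q}$, and since $\lambda=O\bigl(n^{1/(m+1)}\bigr)$ we get $|I|=L_{\pyr{P}}(\lambda)=O(n)$ (with the constant depending only on $P$ and $m$), so the running time is polynomial in the input size. The one mildly delicate point is the mismatch between the strict inequality $t>n-\prod_j(s_j-\lambda\ell_j)$ supplied by Theorem~\ref{thm:basic2} and the non-strict one in the statement: as $n$, the $s_j$, the $\ell_j$ and $\lambda$ are all integers, $\prod_j(s_j-\lambda\ell_j)$ is an integer, and the boundary value $t=n-\prod_j(s_j-\lambda\ell_j)$ is absorbed either by reading the inequality as strict or by running the algorithm at threshold $n-\prod_j(s_j-\lambda\ell_j)+1$, which still recovers every $f$ agreeing with $y$ on at least $t$ coordinates. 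I do not expect a genuine obstacle: the hypotheses $\min_j s_j/\ell_j>\lambda\ge m$ are calibrated so that the admissible window for $\lambda$ in Theorem~\ref{thm:basic2} is nonempty and contains this particular integer, so the argument is essentially bookkeeping, the substantive content already residing in Theorem~\ref{thm:basic2} and in the Ehrhart estimate of Proposition~\ref{prop:ehrhart-pyr}.
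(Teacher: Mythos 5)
Your proposal is correct and follows essentially the same route as the paper: the paper's proof likewise just instantiates Theorem~\ref{thm:basic2} with $\lambda=\left\lceil e\left(\frac{(m+1)n}{\vol(P)}\right)^{1/(m+1)}\right\rceil$, using part~(2) of Proposition~\ref{prop:ehrhart-pyr} (together with $\lambda\geq m$) to get $\sum_{k=0}^{\lambda}L_P(k)>n$ and the hypothesis $\min_j s_j/\ell_j>\lambda$ to put $\lambda$ in the admissible range. The only caveat is the boundary case $t=n-\prod_j(s_j-\lambda\ell_j)$ that you flag: this is an imprecision in the statement itself (Theorem~\ref{thm:basic2} needs the strict inequality, and the paper's proof silently ignores the discrepancy), and your proposed workaround of running the algorithm at threshold $t+1$ does not actually recover polynomials agreeing on exactly $t$ points, so the honest resolution is to read the condition on $t$ as strict, exactly as in Theorem~\ref{thm:basic2}.
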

\begin{proof}
    By the first condition of Theorem~\ref{thm:basic2}, the choice of $\lambda$ and the bound 
    \[ \sum_{k=1}^{\lambda} L_P(k) > n\] 
    of Proposition~\ref{prop:ehrhart-pyr}, for this choice of $\lambda$.
    The existence of the algorithm follows from the assumptions on $\lambda$ and Theorem~\ref{thm:basic2}.
\end{proof}

Theorem~\ref{thm:basic3} can be applied to Reed-Muller codes, where the polytope $P$ is the $m$-simplex
\[\{(i_1,\ldots, i_m)\in \Z^m : i_1\geq 0,\ldots, i_m\geq 0, i_1+\cdots i_m \leq d\} \]
and the set of evaluation points is a box $S_1\times \cdots \times S_m$, with $|S_i|=s$ for $1\leq i\leq m$.
We note that typically $S_i = \ff{q}^*$ for every $1\leq i\leq m$.

\section{Improved method} \label{sec:improved}
The basic method, outlined in the previous sections, can be improved, following the
work of Guruswami and Sudan \cite{GuSud_IDRS99} and Augot and Stepanov \cite{aug_step09}. Here we
require the points $(p_j, y_i)$, $1\leq j\leq n$ to be zeros of $Q(\vX, Y)$ of multiplicity
at least $r$, where $r$ is a parameter to be determined later. In particular, let
$p_j = (p_{1j},\ldots, p_{mj})$, and 
\[ Q^{(j)}(\vX, Y) = Q(\vX + p_j, Y+y_j),\]
where $Q(\vX, Y)$ is given by Equation~\ref{eq:defQ}. A short calculation shows that
\[ Q^{(j)}(\vX, Y) = 
\sum_{j_1,\ldots,j_m,\nu} q^{(j)}_{j_1,\ldots,j_m,\nu}\ X_1^{j_1}\cdots X_m^{j_m} Y^{\nu},\]
where
\begin{equation} \label{eq:defQj} q^{(j)}_{j_1,\ldots,j_m,\nu} =  
\sum_{\substack{(i_1,\ldots,i_m,k)\in I\\ i_1\geq j_1,\ldots,i_m\geq j_m, k\geq \nu}}
\binom{i_1}{j_1}\cdots\binom{i_m}{j_m}\binom{k}{\nu} p_{1j}^{i_1-j_1}\cdots p_{mj}^{i_m-j_m} y_j^{k-\nu} q_{i_1,\ldots,i_m,k}.
\end{equation}

The polynomial $Q(\vX, Y)$ has a zero at $(p_j, y_j)$ with multiplicity at least $r$ if and only if
$Q^{(j)}(\vX, Y)$ has a zero at $(\bar{0},0)$ of multiplicity at least $r$, that is if and only if

\[
q^{(j)}_{j_1,\ldots,j_m,\nu} = 0\ \text{ for every }\ j_1,\ldots,j_m, \nu \in \Z_{\geq 0}
\ \text{ such that } \ j_1+\cdots +j_m + \nu < r.
\]

The improved algorithm in the following.
\begin{algo}{Improved Method}
{Polytope $P$, set of points $S=\{p_1,\ldots,p_n\}\subseteq (\ff{q}^*)^m$, 
point $y=(y_1,\ldots,y_n)\in \ff{q}^n$, index set $I\subseteq \Z_{\geq 0}^{m+1}$, $r \in \N$.}
{Every $f\in \cL_q(P)$ such that $f(p_j) = y_j$ for at least $t$ points in $S$.}
\begin{enumerate}
\item Compute a non-zero solution of the linear system
\begin{eqnarray} 
  q^{(j)}_{j_1,\ldots,j_m,\nu} = 0 &\text{ for }& j_1,\ldots,j_m, \nu \in \Z_{\geq 0}, \label{eq:defQj-system}
  \ j_1+\cdots +j_m + \nu < r,\\
 &\text{ and }& 1\leq j\leq n. \nonumber
\end{eqnarray}
\item Compute the roots of $Q(\vX, Y)$, viewed as a polynomial in $\ff{q}(\vX)[Y]$,
and output the roots that lie in $\cL_q(P)$.
\end{enumerate}
\end{algo}

We note that for $r=1$, the improved method reduces to the basic method of Section~\ref{sec:basic}. 

\begin{theorem} \label{thm:improved-method}
Let $P\subseteq \R^m$ be a lattice polytope, $\ff{q}$ be the finite field with $q$ elements,
$S=\{p_1,\ldots, p_n\}\subseteq (\ff{q}^*)^m$ and let $C_P$ be the related evaluation code.
Let $Q(\vX, Y)$ be the polynomial defined in Equations \ref{polQ}, \ref{eq:defQj}, and \ref{eq:defQj-system} for
some $r\in\N$. Assume 
\begin{enumerate}
\item $I\subseteq\Z^{m+1}$, with $|I| > \binom{m+r}{m+1}\ n$,
\item For every $f(\vX)\in \cL_q(P)$, the polynomial $Q(\vX, f(\vX))$ has less than $rt$
zeros in $S$, counted with multiplicity.
\end{enumerate}
Then the {\it Improved Method} solves the Polynomial Reconstruction Problem with $O(|I|^3)$
operations in $\ff{q}$.
\end{theorem}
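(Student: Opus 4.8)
The plan is to mirror the proof of Theorem~\ref{thm:basic-method}, replacing the simple interpolation/zero-counting dichotomy with its ``multiplicity~$r$'' analogue. First I would verify that the linear system \eqref{eq:defQj-system} admits a non-zero solution. The number of unknowns is $|I|$, the number of coefficients of $Q$. For each evaluation point $p_j$ the vanishing of $Q$ to order $r$ at $(p_j,y_j)$ imposes one linear constraint $q^{(j)}_{j_1,\ldots,j_m,\nu}=0$ for each $(j_1,\ldots,j_m,\nu)\in\Z_{\geq 0}^{m+1}$ with $j_1+\cdots+j_m+\nu<r$; the number of such tuples is $\binom{m+r}{m+1}$ (lattice points of a dilated simplex, i.e.\ the number of monomials in $m+1$ variables of total degree $<r$). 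Hence the total number of homogeneous linear equations is $\binom{m+r}{m+1}\,n$, and assumption~(1), $|I|>\binom{m+r}{m+1}\,n$, guarantees a non-trivial solution $Q(\vX,Y)\neq 0$.

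Next I would carry out the key interpolation-to-root step. Fix $f\in\cL_q(P)$ with $f(p_j)=y_j$ for at least $t$ indices $j$, say for $j$ in a set $T$ with $|T|\geq t$, and set $F(\vX)=Q(\vX,f(\vX))$. The crucial claim is that for each $j\in T$ the point $p_j$ is a zero of $F$ of multiplicity at least $r$. This is the standard substitution fact: since $Q$ vanishes to order $r$ at $(p_j,y_j)$, i.e.\ $Q^{(j)}(\vX,Y)=Q(\vX+p_j,Y+y_j)$ has no terms of total degree $<r$, substituting $Y=f(\vX+p_j)$ — which, because $f(p_j)=y_j$, has zero constant term in the shifted variables — produces a polynomial $F(\vX+p_j)=Q^{(j)}(\vX,f(\vX+p_j))$ all of whose monomials have total degree $\geq r$ (each factor $Y^\nu$ contributes at least $\nu$ and is multiplied by a monomial of degree $j_1+\cdots+j_m$, with $j_1+\cdots+j_m+\nu\geq r$ after a term-by-term check that substituting a polynomial with no constant term cannot decrease the order). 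Therefore $F$ has multiplicity $\geq r$ at each of the $\geq t$ points $p_j$, so the number of zeros of $F$ in $S$ counted with multiplicity is at least $rt$. Assumption~(2) says that if $F\not\equiv 0$ this count is strictly less than $rt$, a contradiction; hence $F\equiv 0$, i.e.\ $f$ is a root of $Q(\vX,Y)\in\ff{q}(\vX)[Y]$, so it is recovered in step~2.

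Finally I would address the complexity claim. Step~1 is a homogeneous linear system with $|I|$ unknowns and $\binom{m+r}{m+1}\,n<|I|$ equations, so Gaussian elimination costs $O(|I|^3)$ operations in $\ff{q}$ (one also needs that the shifted coefficients $q^{(j)}_{j_1,\ldots,j_m,\nu}$ of \eqref{eq:defQj} are computed within this budget, which is routine since there are $O(|I|\cdot n)$ of them, each a sum over a subset of the $|I|$ unknowns, contributing $O(|I|^2 n)=O(|I|^3)$ work). Step~2 uses the root-finding algorithm of \cite{Wu2002} on a polynomial with $N=|I|$ terms, costing $O(N^3)=O(|I|^3)$ by \cite{pel-wu04}, exactly as in Theorem~\ref{thm:basic-method}. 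Summing, the total is $O(|I|^3)$.

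I expect the main obstacle to be the precise bookkeeping in the multiplicity-preservation step: making rigorous that composing $Q^{(j)}$ with a power series (polynomial) of zero constant term cannot lower the order of vanishing, uniformly in the multi-index structure of the $m+1$ variables. All the other pieces — the simplex count $\binom{m+r}{m+1}$, the dimension argument for solvability, and the complexity accounting — are direct generalizations of the $r=1$ case already handled in Theorem~\ref{thm:basic-method}.
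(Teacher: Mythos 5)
Your proposal is correct and follows essentially the same route as the paper's proof: counting the $\binom{m+r}{m+1}\,n$ homogeneous constraints against the $|I|$ unknowns to get a non-zero $Q$, using multiplicity-$r$ vanishing of $F(\vX)=Q(\vX,f(\vX))$ at the at least $t$ agreement points together with assumption (2) to force $F\equiv 0$, and the same complexity accounting via \cite{Wu2002} and \cite{pel-wu04} — in fact you supply the substitution/order-preservation detail that the paper merely asserts. One cosmetic slip: the substituted second argument should be $Y=f(\vX+p_j)-y_j$ (the shift with zero constant term), not $f(\vX+p_j)$; with that correction your multiplicity argument is exactly right.
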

\begin{proof}
Equation \ref{eq:defQj-system} defines a linear system in $|I|$ variables and at most $\binom{m+r}{m+1}\ n$ equations.
The assumption $|I|>\binom{m+r}{m+1}\ n$ ensures that a non-zero solution exists. Any such solution defines
a polynomial $Q(\vX,Y)$ that has a zero of multiplicity at least $r$ at $(p_j, y_j)$ for every $1\leq j\leq n$.
Let $f\in \cL_q(P)$ be a polynomial with $f(p_{j}) = y_{j}$ for at least $t$
points $p_{j}\in S$. Each of those points is a zero of multiplicity at least $r$ of the polynomial 
$F(\vX) = Q(\vX,f(\vX))$. The second assumption implies that $F(\vX)$ must be identically zero. Equivalently, 
$f(\vX)$ is a root of the polynomial $Q(\vX, Y)\in \ff{q}(\vX)[Y]$.

Regarding the time complexity of the algorithm, step 1 amounts to solving a linear system of
$\binom{m+r}{m}\ n$ variables and $I$ equations. This can be done
with $O(|I|^3)$ operations in $\ff{q}$ using standard Gauss elimination.
In step 2, the roots of $Q(\vX, Y)$ can be computed using the
algorithm in \cite{Wu2002}, using $O(N^3)$ operations in $\ff{q}$, as shown in \cite{pel-wu04},
where $N$ is the number of terms in $Q(\vX, Y)$. As shown above, $N = |I|$ and the total time complexity is as claimed.
\end{proof}

We apply the method outlined in Theorem ~\ref{thm:improved-method}, 
for $I = \lambda \pyr{P}$, where $\lambda$ and $r$ are parameters to be specified later.

\begin{theorem}\label{thm:improved1}
 Let $P\subseteq \R^m$ be a lattice polytope, $\ff{q}$ be the finite field with $q$ elements,
 $S=\{p_1,\ldots, p_n\}\subseteq (\ff{q}^*)^m$ and let $C_P$ be the related evaluation code. 
 Let $(i_1,\ldots, i_m)\in P$ be a point that maximizes the sum $i_1+\cdots +i_m$
 and assume that $S$ contains the box $S_1\times \cdots \times S_m$, with $|S_j| = s_i$,
 $1\leq j\leq m$. Let $\lambda, r\in \N$ be such that 
 \begin{enumerate}
     \item $\lambda i_m < r s_m$, 
     \item $\lambda i_j < s_j \cdot \min \left\lbrace \frac{\sqrt[m-1]{r}-1}{\sqrt[m-1]{r}-\frac{1}{r}}, \frac{\sqrt[m-2]{2}-1}{\sqrt[m-2]{2} - \frac{1}{2}}\right\rbrace$, for $1\leq j\leq m-1$,
     \item $\lambda \geq \max\left\{m, e\, \left(\frac{(m+1) n}{\vol(P)} \binom{m+r}{m+1}\right)^{\tfrac{1}{m+1}}\right\}$
 \end{enumerate}
 Then, the  improved method solves the polynomial reconstruction problem,
 for any positive integer $t$ such that 
 \[ t > n - s_1\cdots s_m \prod_{j=1}^m \left(1 - \frac{\lambda i_j}{r s_j} \right)\]
 with $O\left((m^r n)^3 \right)$ operations in $\ff{q}$.
\end{theorem}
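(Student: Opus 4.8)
The plan is to invoke Theorem~\ref{thm:improved-method} with the index set $I=\lambda\,\pyr{P}$, mirroring the way Theorem~\ref{thm:basic1} derives the basic bound from Theorem~\ref{thm:basic-method}. Two things have to be established: the cardinality hypothesis $|I|>\binom{m+r}{m+1}n$, and the zero-count hypothesis on $F(\vX)=Q(\vX,f(\vX))$ for every $f\in\cL_q(P)$.

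For the first, Proposition~\ref{prop:ehrhart-pyr} gives $|I\cap\Z^{m+1}|=L_{\pyr{P}}(\lambda)=\sum_{k=0}^{\lambda}L_P(k)$, and assumption~(3) on $\lambda$ is exactly the hypothesis of part~(3) of that proposition, so $|I|>\binom{m+r}{m+1}n$. For the second, fix $f\in\cL_q(P)$ and put $F(\vX)=Q(\vX,f(\vX))$. By Proposition~\ref{prop:support-F}, $F\in\cL_q(\lambda P)$, hence $\np{F}\subseteq\lambda P$; since $(i_1,\ldots,i_m)$ maximises $i_1+\cdots+i_m$ over $P$, the same reasoning as in the proof of Theorem~\ref{thm:basic1} shows $X_1^{\lambda i_1}\cdots X_m^{\lambda i_m}$ is a leading monomial of $F$ for a graded monomial order refining the total-degree grading. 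Assumptions~(1) and~(2) are then precisely the hypotheses of Theorem~\ref{thm:footprint-multi} for $F$ with exponent vector $(\lambda i_1,\ldots,\lambda i_m)$, so that theorem bounds the number of points of $S_1\times\cdots\times S_m$ at which $F$ vanishes to order at least $r$ by $s_1\cdots s_m\bigl(1-\prod_{j=1}^m(1-\tfrac{\lambda i_j}{rs_j})\bigr)$; adding the at most $n-s_1\cdots s_m$ points of $S$ lying outside the box, $F$ vanishes to order $\ge r$ at no more than $n-s_1\cdots s_m\prod_{j=1}^m(1-\tfrac{\lambda i_j}{rs_j})$ points of $S$.

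To finish, suppose $f(p_j)=y_j$ for at least $t$ indices $j$. As recorded in the discussion preceding the Improved Method, $Q$ vanishes to order $\ge r$ at each $(p_j,y_j)$, and substituting $Y=f(\vX)$ (using $f(p_j)=y_j$) shows $F$ vanishes to order $\ge r$ at each such $p_j$; thus $F$ has at least $t$ points of multiplicity $\ge r$ in $S$. The assumed inequality $t>n-s_1\cdots s_m\prod_j(1-\tfrac{\lambda i_j}{rs_j})$ contradicts the bound of the previous paragraph unless $F\equiv 0$, i.e.\ unless $f$ is a root of $Q(\vX,Y)$ over $\ff{q}(\vX)$, which step~2 of the algorithm extracts. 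This supplies hypothesis~2 of Theorem~\ref{thm:improved-method} in the "multiplicity $\ge r$" form that its proof actually uses, so the Improved Method solves the reconstruction problem; the claimed running time follows from the $O(|I|^3)$ bound there once $|I|=L_{\pyr{P}}(\lambda)$ is estimated from the choice of $\lambda$ by Ehrhart's theorem.

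I expect the delicate step to be pinning down $X_1^{\lambda i_1}\cdots X_m^{\lambda i_m}$ as a leading monomial of $F$ with respect to a monomial order for which the (coordinate-asymmetric) hypotheses of Theorem~\ref{thm:footprint-multi} genuinely hold — in particular when the maximising point of $P$ is not unique — together with the minor reconciliation between the "counted with multiplicity" phrasing in Theorem~\ref{thm:improved-method} and the "order at least $r$" count supplied by Theorem~\ref{thm:footprint-multi}; the Ehrhart bookkeeping behind the running-time estimate is routine but also requires the explicit choice of $\lambda$ to be made.
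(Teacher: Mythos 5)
Your proposal follows essentially the same route as the paper's own proof: take $I=\lambda\,\pyr{P}$, use Proposition~\ref{prop:ehrhart-pyr}(3) for the cardinality hypothesis, Proposition~\ref{prop:support-F} plus the maximality of $i_1+\cdots+i_m$ to identify the leading monomial of $F$, Theorem~\ref{thm:footprint-multi} for the multiplicity-$r$ zero count, and the bound on $t$ to force $F\equiv 0$, with the running time read off from $|I|$. The points you flag as delicate (the monomial order behind the leading-monomial claim and the multiplicity bookkeeping) are handled no more explicitly in the paper, so your argument matches it in both substance and level of detail.
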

\begin{proof}
 We apply Theorem~\ref{thm:improved-method} for $I = \lambda \pyr{P}$, where 
 $\lambda\ge \max\left\{m, e\, \left(\frac{(m+1) n}{\vol(P)} \binom{m+r}{m+1}\right)^{\tfrac{1}{m+1}} \right\}$. By Proposition~\ref{prop:ehrhart-pyr}, 
 \[| \lambda \pyr{P} \cap \Z^{m+1} | = \sum_{k=0}^{\lambda} L_P(k) > n\, \binom{m+r}{m+1}.\]
      
  Next, we bound the number of zeros of the polynomial $F(\vX) = Q(\vX, f(\vX))\in \ff{q}[\vX]$, counted 
  with multiplicity, where $f$ is any polynomial in $\cL_q(P)$. The Newton polytope of $F$ is $\lambda P$, by 
  Proposition~\ref{prop:support-F} and the maximality of the sum $i_1+\cdots + i_m$ implies that 
  $X_1^{\lambda i_1}\cdots X_m^{\lambda i_m}$ is a leading monomial of $F$. 
  The conditions (1)-(2) and Theorem~\ref{thm:footprint-multi}, ensure that $F$ has at most 
  \[ s_1\cdots s_m - s_1\cdots s_m \prod_{j=1}^m \left(1 - \frac{\lambda i_j}{r s_j} \right)\]
  zeros of multiplicity at least $r$ in $S_1\times \cdots \times S_m$. Then the number of zeros of $F$
  of multiplicity at least $r$ in $S$ is at most 
  $n - s_1\cdots s_m \prod_{j=1}^m \left(1 - \frac{\lambda i_j}{r s_j} \right)$, and the bound on $t$    
  implies that $F$ is identically zero, 
  that is, $f(\vX)$ is a zero of the polynomial $Q(\vX, Y)$ viewed as a polynomial in $Y$ over the field
  $\ff{q}(\vX)$. 

  The claim on the time complexity of the method follows from Theorem~\ref{thm:improved-method} and a 
  choice of least $\lambda$, that satisfies the third condition of the theorem. For this choice, 
  we note that $|I| = O(m^r n)$.
\end{proof}

\section{Reed-Muller codes}
As an example of Theorem~\ref{thm:improved1}, we give an estimate of the list decoding radius for the Simplex
\[ P = \left\{(x_1,\ldots, x_m)\in \R^m : x_1+\cdots +x_m\leq d\right\}\]
and taking $S = S_1 \times \cdots \times S_m$, with $|S_j|=s$ for $1\leq j\leq m$.
The point $(i_1,\ldots,i_m)$ in $P$ that maximizes the sum $i_1+\cdots+i_m$ may be take so that 
$i_j$ is either $\lfloor d/m\rfloor$ or $\lceil d/m \rceil$. 

Furthermore, 
\begin{eqnarray*}
e\, \left(\frac{(m+1) n}{\vol(P)}\, \binom{m+r}{m+1}\right)^{\tfrac{1}{m+1}} &=&
e\, \left( r(r+1)\cdots (r+m) \left(\frac{s}{d}\right)^m\right)^{\tfrac{1}{m+1}} \\
&=& e r \left(\prod_{j=1}^m \left(1+\frac{j}{r}\right)\left(\frac{s}{d}\right)^m\right)^{\tfrac{1}{m+1}} \\
&\leq& e r \exp\left(\frac{m}{2r}\right)\left(\frac{s}{d}\right)^{\tfrac{m}{m+1}}, 
\end{eqnarray*}
and Condition 3 of the theorem is satisfied for
\[
\lambda \geq e r \exp\left(\tfrac{m}{2r}\right)\left(\frac{s}{d}\right)^{\tfrac{m}{m+1}}.
\]
Let $c$ be an upper bound for $\min \left\lbrace \frac{\sqrt[m-1]{r}-1}{\sqrt[m-1]{r}-\frac{1}{r}}, \frac{\sqrt[m-2]{2}-1}{\sqrt[m-2]{2} - \frac{1}{2}}\right\rbrace$.
There exist a $\lambda$ that satisfies Conditions 1, 2 and 3 if
that is, for any $r$ such that
\[ \left(\frac{s}{d}\right)^{\tfrac{1}{m+1}} > \frac{e r}{c m} \exp\left(\frac{m}{2 r}\right).\]
Assuming $\frac{s}{d}$ is large enough, we may choose
$r = m$, and $\lambda$ such that
\[
\lambda \geq e^{\tfrac{3}{2}} m \left(\frac{s}{d}\right)^{\frac{m}{m+1}}. 
\]
The list decoding radius becomes
\[
    s^m \left(1 - \frac{e^{\tfrac{3}{2}}}{m} \left(\frac{d}{s}\right)^{\tfrac{1}{m+1}}\right)^m.
\]

\newpage



\bibliographystyle{abbrv}

\begin{thebibliography}{10}

\bibitem{alon99}
N.~Alon.
\newblock Combiantorial nullstellensatz.
\newblock {\em Combinatorics, Probability and Computing}, pages 7--29, 1999.

\bibitem{aug_step09}
D.~Augot and M.~Stepanov.
\newblock {\em A Note on the Generalisation of the Guruswami–Sudan List
  Decoding Algorithm to Reed–Muller Codes}.
\newblock Gröbner Bases, Coding, and Cryptography. Springer, 2009.

\bibitem{berle-84}
E.~Berlekamp.
\newblock {\em Algebraic Coding Theory}.
\newblock CA: Aegean Park Press, Laguna Hills, 1984.

\bibitem{geil-thomsen-2015}
O.~Geil and T.~C.
\newblock More results on the number of zeros of multiplicity at least $r$.
\newblock {\em arXiv:1410.7084v2}, 2015.

\bibitem{geil-2000}
O.~Geil and T.~Høholdt.
\newblock Footprints or generalized bezout’s theorem.
\newblock {\em IEEE Trans. Inform. Theory}, (2):635--641, 2000.

\bibitem{gor-zie-61}
D.~Gorenstein and N.~Zierler.
\newblock A class of error-correcting codes in $p^m$ symbols.
\newblock {\em J. SIAM}, 9:207--214, 1961.

\bibitem{GuSud_IDRS99}
V.~Guruswami and M.~Sudan.
\newblock Improved decoding of reed-solomon and algebraic-geometry codes.
\newblock {\em IEEE Trans. Inf. Theor.}, 45(6):1757--1767, Sept. 2006.

\bibitem{LitSchen06}
J.~Little and H.~Schenck.
\newblock Toric {S}urface {C}odes and {M}inkowski sums.
\newblock {\em SIAM J. Discrete Math.}, 20(4):999--1014.

\bibitem{Ostro76}
A.~M. Ostrowski.
\newblock On multiplication and factorization of polynomials, ii.
  irreducibility discussion.
\newblock {\em aequationes mathematicae}, 14(1):1--31, 1976.

\bibitem{pel-wu04}
R.~Pellikaan and X.-W. Wu.
\newblock List decoding of q-ary reed–muller codes.
\newblock {\em IEEE Trans. on Information Theory}, (4):679--682, 2004.

\bibitem{SoprLpCt15}
I.~Soprunov.
\newblock Lattice {P}olytopes in coding theory.
\newblock {\em J. {A}lgebra {C}omb. {D}iscrete {A}ppl.}, 2(2):85--94.

\bibitem{Sud_DRS97}
M.~Sudan.
\newblock Decoding of reed solomon codes beyond the error-correction bound.
\newblock {\em Journal of Complexity}, 13(1):180 -- 193, 1997.

\bibitem{sugi-75}
Y.~Sugiyama, M.~Kasahara, S.~Hirasawa, and T.~Namekawa.
\newblock A method for solving a key equation for decoding goppa codes.
\newblock {\em Inform. and Control}, 27:87--99, 1975.

\bibitem{Wu2002}
X.-W. Wu.
\newblock An algorithm for finding the roots of the polynomials over order
  domains.
\newblock In {\em n Proc. of 2002 IEEE International Symposium on Information
  Theory}, 2002.

\end{thebibliography}

\end{document}